
\documentclass[one column]{IEEEtran}
%
\usepackage{dcolumn}

\usepackage{epsfig}
\usepackage[cmex10]{amsmath}
\usepackage{amssymb}
\usepackage{amsthm}
\usepackage{amsfonts}
\usepackage{hyperref}
\usepackage{subcaption}
\usepackage{setspace}
\usepackage{bm}

\usepackage{tikz}

\newcolumntype{R}{D {.}{.}{1,0}}

\usepackage{algpseudocode}
\usepackage{algorithm}

\input{Definition.def}

\newcommand{\figref}[1]{Fig.\,\ref{#1}}


%

%
\ifCLASSOPTIONcompsoc
  \usepackage[nocompress]{cite}
\else
  \usepackage{cite}
\fi
%

%
\ifCLASSINFOpdf
\else
\fi
\hyphenation{}

\begin{document}
%
\title{Berrut Approximated Coded Computing: Straggler Resistance Beyond Polynomial Computing}
%
%
%
%

\author{Tayyebeh~Jahani-Nezhad, Mohammad~Ali~Maddah-Ali
\IEEEcompsocitemizethanks{\IEEEcompsocthanksitem Tayyebeh Jahani-Nezhad is with  the  Department of Electrical Engineering, Sharif University of Technology, Tehran, Iran.\protect\\
E-mail: tayyebeh.jahaninezhad@ee.sharif.edu
\IEEEcompsocthanksitem Mohammad Ali Maddah-Ali is with  the  Department of Electrical Engineering, Sharif University of Technology, Tehran, Iran.\protect\\
Email:  maddah\_ali@sharif.edu}
\thanks{}}

\IEEEtitleabstractindextext{%
\begin{abstract}
One of the major challenges in using distributed learning to train complicated models with large data sets is to deal with stragglers effect.  As a solution, coded computation has been recently proposed to efficiently add redundancy to the computation tasks. In this technique, coding is used across data sets, and computation is done over coded data, such that the results of an arbitrary subset of worker nodes with a certain size are enough to recover the final results.   
The major drawbacks with those approaches are (1) they are limited to polynomial functions, (2) the number of servers that we need to wait for grows with the degree of the model, (3) they are not numerically stable for computation over real numbers.  In this paper, we propose Berrut Approximated Coded Computing (BACC), as an alternative approach, as a numerically stable solution, which works beyond polynomial functions computation and with any number of servers. 
The accuracy of the approximation is established theoretically and verified by simulation. In particular, BACC is used to train a deep neural network on a cluster of servers, which outperforms alternative uncoded solutions in terms of the rate of convergence.
\end{abstract}

}

\maketitle

\IEEEdisplaynontitleabstractindextext

%
\IEEEpeerreviewmaketitle

{\section{Introduction}}

%
%
%
%
Distributed machine learning is known as an inevitable solution to overcome the challenge of training complicated  models such as deep learning with large data sets \cite{dean2012large,4455,mapreduce}. In this solution, the data set or the model parameters are distributed among several servers and the tasks of training/evaluating are performed distributedly by those servers, in coordination with each other. In one scenario, for example, the parameters are maintained in a master node, while the data set is shared among some worker nodes. The worker nodes process data set locally and send the results to the master node to update the parameters. 
Distributed machine learning raises a list of challenges related to convergence rate, communication load, privacy, existence of faulty nodes, etc. One of the major challenges here is dealing with stragglers, or slow servers. Indeed, in those systems, the speed of computing is dominated by the speed of the slowest servers, as the master node needs to wait for all the worker nodes to complete their tasks \cite{tail}. One approach, proposed to deal with stragglers, is known as coded computing. In this approach, the computation is done over coded data, rather than raw data. Coding is used to efficiently add redundancy to the computing, such that from the results of a subset of servers, the final result can be calculated. This means that the master node does not have to wait for the results of stragglers to complete his task.
It is shown that coded computing can be effective in the context of distributed machine learning and can be applied in different problems such as coded distributed matrix multiplication and polynomial computations.
In \cite{high,speed}, one or both matrices are coded separately using MDS codes to compute matrix multiplication. In \cite{Polynomial}, \emph{polynomial codes}, and in \cite{entangle}, the extended version of these codes called \emph{entangled polynomial codes}, are proposed to code
each matrix with the desired partitioning, such that the number of unwanted computations is minimized. The general version of entangled polynomial
codes is proposed in \cite{opt-recovery} to multiply more than two matrices.  
In \cite{jahani2018codedsketch}, \emph{CodedSketch} as a straggler-resistant coded scheme is introduced to compute the approximation of matrix multiplication where the exact result of the multiplication is not required. \emph{Lagrange codes} \cite{lagrange} provide a novel strategy to compute an arbitrary polynomial function, without waiting for stragglers, and communication across worker nodes. Also, the privacy guarantees in secure multi-party computations are satisfied in this coded computing. As an application of Lagrange codes in distributed learning, a secure training process called \emph{CodedPrivateML} is proposed in \cite{codedpri}, which uses Lagrange codes to guarantee the privacy of data and the resulting logistic regression model. Since Lagrange coding technique is limited to computations with polynomial evaluation forms,
\emph{CodedPrivateML} uses the polynomial approximations of non-linearities of the model. Coding techniques can also be used to reduce the communication load in distributed learning problems in the presence of the stragglers \cite{Grad_Dimakis,ye2018communication,grad_app_dimakis,grad_app}.	 Existing coded computation approaches have some major challenges, which are a potential bottleneck in several important problems such as  distributed learning:
\begin{enumerate}
	\item
	They are limited to a specific class of functions such as matrix multiplications or polynomial functions, and a wide class of functions has not been considered.
	\item
	To calculate polynomial functions, the total number of servers needed is proportional to the degree of the polynomial times the size of the data set which can be prohibitively large.
	\item
	They are often designed for computations over finite fields. For computations over real numbers, those approaches face the serious problems in terms of computation instability. The reason is that they rely on Reed-Solomon decoding/Lagrange interpolation which are not suitable for real numbers.
\end{enumerate}
The main reason for the third challenge is that the decoding methods of those coded computation approaches are based on solving a system of linear equations with a Vandermonde matrix which is a matrix of coefficients. Real $n\times n$ Vandermonde matrices are ill-conditioned. More specifically, their condition number grows exponentially with $n$ \cite{gautschi1987lower}. There are some studies to overcome this issue. For example in \cite{fahim2019numerically}, instead of monomial basis, Chebyshev polynomials are used to develop a numerically stable approach for  polynomially coded computing such as matrix multiplication and Lagrange coded computing. Also, the condition number of the resulting decoding matrix is calculated in \cite{fahim2019numerically}, which is bounded polynomially in the number of nodes provided that the number of stragglers is a constant. In \cite{das2019random},  a convolutional coding approach is proposed to compute distributed matrix multiplications. Also,
a computable upper bound on the worst-case condition number is provided over all resulting decoding matrices. In \cite{ramamoorthy2019numerically},  the structured matrices such as circulant permutation matrices and rotation matrices are used as evaluation points of polynomials in coded distributed matrix computation problems. If the number of stragglers is a constant, then the worst-case condition numbers in this scheme grow polynomially in the number of nodes. Recently in \cite{soleymani2020analog}, analog Lagrange coded computing (ALCC) is proposed as an extension of Lagrange coded computing for the analog domain, where the operations in ALCC are done over the complex plane to evaluate polynomial functions. ALCC still faces the first two major challenges.  On the other hand, computing over real-valued data sets in the complex domain have greater required computations compared to the real domain.

In this paper, we propose  \emph{Berrut Approximated Coded Computing (BACC)} to mitigate major challenges of existing coded computation approaches. BACC  is a numerically stable coded scheme to approximately compute arbitrary functions, not necessarily polynomials, in a distributed setting consisting of a master node and some worker nodes. In this approach, the outcomes of any arbitrary subsets of available worker nodes are sufficient to calculate the approximated result, of course the more outcomes are received, the more accurate the final result will be. The error of this approximation is theoretically proven to be bounded. In addition, BACC is numerically stable with low computational complexity. 
This is also verified by simulation results. In particular, BACC is used to train a deep learning model, in which each worker node computes the gradients of desired functions based on specific combinations of some mini-batches of the data set. Having received computed gradients from a subset of available worker nodes, the master node can approximately decode the gradients in a numerically stable manner. The next iteration will be run after updating the parameters of the network.
Implementation results show that the proposed scheme outperforms repetitive computations in terms of the rate of convergence. 

The rest of this paper is organized as follows: In Section \ref{II} some preliminaries are reviewed. The proposed scheme  
is introduced in section \ref{propose}, and we detail the analytical guarantees in Section \ref{V}.
Section \ref{VI} describes our simulation results and Section \ref{LearningSec} represents the application of the proposed scheme in deep learning. Section \ref{VIII} shows the experiment results.
\subsection{Notation}
In this paper matrices and vectors are denoted by upper boldface letters and lower boldface letters respectively. $\mathcal{C}[a,b]$ denotes the space of all continuous functions on $[a,b]$, where $[a,b]$ is a closed interval. For $n_1, n_2\in\mathbb{Z}$ the notation $[n_1:n_2]$ represents the set $\{n_1,\dots n_2\}$. Also, $[n]$ denotes the set $\{0,\dots,n\}$ for $n\in\mathbb{Z}$. 
Furthermore, the cardinality of a set $\mathcal{S}$ is denoted by $|\mathcal{S}|$.  $\norm{f}$ denotes the maximum norm of function $f(x)$ over $x$ domain, i.e., $\norm{f}=\max_{x\in[a,b]}|f(x)|$. The $i$th element of a vector $\mathbf{v}$, is denoted by $[\mathbf{v}]_i$  and  the $(i,j)$-th entry of a matrix $\mathbf{A}$ is denoted by $[\mathbf{A}]_{i,j}$.
\section{Preliminaries}\label{II}
In this subsection, we review some preliminaries which are needed in the following sections.
\begin{definition}[Lagrange Polynomial Interpolation \cite{lagrange}] \label{defLag} Consider a set of $n+1$ distinct interpolation points $\mathcal{X}_n=\{x_i\}_{i=0}^n$, where $a\le x_0< x_1<\dots<x_n\le b$, for some real numbers $a,b$. In  addition, assume that $f_i=f(x_i), i\in[n]$, as the samples of a function $f\in \mathcal{C}[a,b]$ are given. \emph{Lagrange polynomial interpolation}  finds a polynomial $p_{\text{Lag}}\in\Pi_n$ which interpolates $f$ at $x_i$  i.e., $p_{\text{Lag}}(x_i)=f_i$, where $\Pi_n$ is the set of all polynomials of degree at most $n$ with real coefficients. In this method, $p_{\text{Lag}}(x)$ can be uniquely written as 
	\begin{align}\label{lag_int}
	p_{\text{Lag}}(x)=\sum_{i=0}^{n}{f_i\ell_{i,\text{Lag}}(x)}=L(x)\sum_{i=0}^{n}{\frac{w_if_i}{x-x_i}},
	\end{align} 
	where  $L(x)\triangleq\prod_{k=0}^{n}{(x-x_k)}$, and $\ell_{j,\text{Lag}}(x)$ for $j\in[n]$ are the Lagrange basis functions defined as 
	\begin{align}
	\ell_{j,\text{Lag}}(x)\triangleq\frac{\prod_{k=0,k\ne j}^{n}{(x-x_k)}}{\prod_{k=0,k\ne j}^{n}{(x_j-x_k)}}, j\in[n].	
	\end{align}
	The weights $w_i$ corresponding to $x_i$ are calculated as
	\begin{align}
	w_i=\frac{1}{\prod_{k=0,k\ne j}^{n}{(x_i-x_k)}}, i\in[n].\label{w}
	\end{align}	
\end{definition}
\begin{remark}
	According to Definition \ref{defLag}, Lagrange polynomial interpolation needs $\mathcal{O}(n^2)$ floating point operations to evaluate $p_{\text{Lag}}(x)$ at some $x\in\mathbb{R}$.	
\end{remark}
The representation of Lagrange interpolation in \eqref{lag_int} can be modified in such a way that it needs $\mathcal{O}(n)$ floating point operations to be evaluated. Assume the Lagrange interpolation of a constant function $g(x)=1$.  So, according to \eqref{lag_int} we have 
\begin{align}\label{1}
1=\sum_{i=0}^{n}\ell_{i,\text{Lag}}(x)=L(x)\sum_{i=0}^{n}{\frac{w_i}{x-x_i}}.
\end{align}
The new representation for Lagrange polynomial interpolation is obtained after dividing \eqref{lag_int} by \eqref{1} and canceling the factor $L(x)$ which is a common factor in the numerator and denominator.
\begin{definition}[Barycentric Polynomial Interpolation \cite{bary2004}] Another representation of Lagrange interpolation formula is called \textit{Barycentric polynomial interpolation} which is expressed as  
	\begin{align}\label{bary}
	p_{\text{Bary}}(x)=\sum_{i=0}^{n}\frac{{\frac{w_i}{(x-x_i)}}}{\sum_{j=0}^{n}\frac{w_j}{(x-x_j)}}f_i,
	\end{align}     
	where $w_i,i\in[n]$ is still defined by \eqref{w}.    
\end{definition}
\begin{remark}\label{remark2}
	Since the weights $w_i$ appear in both numerator and denominator of \eqref{bary}, any constant common factors in the weights can be factored out and canceled out without affecting the value of $p_{\text{Bary}}(x)$. It is one of the advantages of barycentric formula which avoids overflows and underflows in the weights computation.
\end{remark}
\begin{remark}
	Only $\mathcal{O}(n)$ operations are required for each evaluation of $p_{\text{Bary}}(x)$ in barycentric interpolation formula. Also, \eqref{bary} allows us to include additional interpolation points more easily and is more stable than the Lagrangian interpolation formula for a given point set \cite{bary2004}. 
\end{remark}
\begin{remark}
	There exist several explicit formulas for $w_i,i\in[n]$ for some particular sets of nodes. For example, assume the set of interpolation points are chosen from  \textit{Chebyshev points of the first kind} as 
	\begin{align}
	x_j=\cos{\frac{(2j+1)\pi}{2n+2}}, j=[n].
	\end{align}
	In \cite{henric}, it is shown that after eliminating the constant factors independent of $j$, the weights are computed as 
	\begin{align}
	w_j=(-1)^j\sin{\frac{(2j+1)\pi}{2n+2}}, j=[n].
	\end{align} 
	Another choice for the interpolation points is the \textit{Chebyshev points of the second kind} as 
	\begin{align}
	x_j=\cos{\frac{j\pi}{n}},j=[n].
	\end{align}
	In \cite{Salzer}, it is shown 
	\begin{align}\label{wcheby}
	w_j={(-1)}^j\delta_j, \hspace{1cm} \delta_j=\begin{cases}
	& \frac{1}{2},\hspace{2mm}\text{ if } j=0 \text{ or}\hspace{1mm} j=n, \\ 
	& 1, \hspace{2mm} \text{ otherwise}; 
	\end{cases}
	\end{align}
	Also, if the equidistant nodes on the interval $[-1,1]$ are chosen as the interpolation points then the weights will be calculated by $w_j={(-1)}^j\binom{n}{j}$ which is exponential by $n$ \cite{henric}. Thus, equidistant nodes are improper nodes for large values of $n$. That means polynomial interpolation in equidistant nodes is ill-conditioned.
\end{remark}
\begin{remark}
	If those interpolation points are chosen in the interval $[a,b]$ instead of $[-1,1]$, the original formula for the weights are only multiplied  by the constant factor $2^{n}{(b-a)}^{-n}$ which can be dropped in barycentric formula according to Remark \ref{remark2}. 
\end{remark}
\begin{remark}\label{remark6}
It is well known in the approximation theory that any other interpolation points, clustered at the endpoints of the interval $[-1,1]$ and distributed  asymptotically with density  $\frac{1}{\sqrt{1-x^2}}$ (as $n\to\infty$), can be proper. Thus, the calculated weights $w_i$ in \eqref{w} corresponds to these interpolation points do not grow exponentially by $n$ \cite{bary2004}. 
\end{remark}
It has been known that rational functions can also be used for interpolation to overcome some of the deficiencies of polynomial interpolation. 
 Since rational interpolation uses
	rational functions, and thus includes polynomial interpolation as a special case,  thus the resulting approximations would outperform polynomial interpolations. \cite{berrut2005recent,pachon2012fast}.
\begin{definition}[Rational Interpolant \cite{werner1986}]
	Let the set of $n+1$ distinct interpolation points $\mathcal{X}_n=\{x_i\}_{i=0}^{n}$
	be given with samples of a real value function $f\in \mathcal{C}[a,b]$ at these nodes, i.e., $f_i=f(x_i), i\in[n]$. The basic rational interpolant is defined as 
	\begin{align}\label{eq9}
	r(x)=\frac{p_{\tilde{m}}(x)}{q_{\tilde{n}}(x)}\in \mathcal{R}_{\tilde{m},\tilde{n}},
	\end{align}
	where $r(x_i)=f_i$ for $i\in[n]$. Also, $\mathcal{R}_{\tilde{m},\tilde{n}}$ is  the set of all rational functions with numerator and denominator degrees of at most $\tilde{m}$ and $\tilde{n}$ respectively.  Since there are $\tilde{m}+\tilde{n}+1$ unknown coefficients for $r(x)$, we must have $\tilde{m}+\tilde{n}+1=n+1$.
\end{definition}
In the following the barycentric rational interpolant is defined as a specific representation for the rational interpolant.
\begin{definition}[Barycentric Rational Interpolation \cite{bary2004,berrut1997,werner,berrut1997matrices}]\label{defBaryRa} The barycentric interpolant formula in \eqref{bary} can be applied with an arbitrary set of non-zero weights $u_i,i\in[n]$. The resulting interpolant is a rational interpolant called  \textit{Barycentric Rational Interpolation} and defined as 
	\begin{align}\label{rational}
	r_{\text{Bary}}(x)=\sum_{i=0}^{n}\frac{{\frac{u_i}{(x-x_i)}}}{\sum_{j=0}^{n}\frac{u_j}{(x-x_j)}}f_i\in\mathcal{R}_{n,n},
	\end{align}
	for all $u_i\ne0$ and bounded $f_i=f(x_i)$. Note that this interpolation has no restriction on calculating the weights $u_i$ by the distribution of points.  
\end{definition}
\begin{remark}
	Any rational interpolation of function $f$ using $f_i=f(x_i)$ for $i\in[n]$ can be expressed in barycentric rational form with some weights $u_i,i\in[n]$ \cite{berrut2005recent}.
\end{remark}

\begin{definition}[Berrut's Rational Interpolant \cite{berrut1988rational}]
	According to Definition \ref{defBaryRa}, the following rational function 
	\begin{align}\label{berrut}
	r_{\text{Berrut}}(x)=\sum_{i=0}^{n}\frac{{\frac{{(-1)}^i}{(x-x_i)}}}{\sum_{j=0}^{n}\frac{{(-1)}^j}{(x-x_j)}}f_i,
	\end{align}
	is called \textit{Berrut's Rational Interpolant} which interpolates $f_k$ at $x_k, k\in[n]$. The basis functions of this interpolant is denoted by
	\begin{align}
	\ell_{i,\text{Berrut}}=\frac{\frac{{(-1)}^i}{(x-x_i)}}{\sum_{j=0}^{n}\frac{{(-1)}^j}{(x-x_j)}}, i\in[n].
	\end{align} 
\end{definition}
\begin{lemma}\label{denum}
	Let the set of $n+1$ distinct interpolation points $\mathcal{X}_n=\{x_i\}_{i=0}^{n}$
	be given such that $x_0<x_1<\dots<x_n$, and $L(x)=\prod_{k=0}^{n}{(x-x_k)}$. Then the polynomial $q(x)=L(x)\sum_{j=0}^{n}{\frac{{(-1)^j}}{x-x_j}}$ has no real root \cite{berrut1988rational}.
\end{lemma}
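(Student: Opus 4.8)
The plan is to reduce the claim to a statement about the rational "denominator" function and then control its sign by a pairing argument. Writing $q(x)=\sum_{j=0}^{n}(-1)^j\prod_{k\ne j}(x-x_k)$, I would first note that $L$ has only simple zeros, one at each node, so for every $x\notin\mathcal{X}_n$ we have $L(x)\ne0$ and hence $q(x)=0$ if and only if $D(x):=\sum_{j=0}^{n}\frac{(-1)^j}{x-x_j}=0$. This splits the problem into two independent tasks: (i) show $q$ does not vanish at the nodes themselves, and (ii) show $D$ has no zero on $\R\setminus\mathcal{X}_n$.

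Task (i) is a direct evaluation. At $x=x_i$ only the $j=i$ summand of $q$ survives, giving $q(x_i)=(-1)^i\prod_{k\ne i}(x_i-x_k)$. Among the factors $x_i-x_k$, exactly the $n-i$ indices $k>i$ are negative, so $q(x_i)=(-1)^n\bigl|\prod_{k\ne i}(x_i-x_k)\bigr|$, which is nonzero and, notably, carries the same sign $(-1)^n$ for every $i$.

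For task (ii), I would fix one of the open intervals $(x_i,x_{i+1})$ (the unbounded rays $x<x_0$ and $x>x_n$ are the special cases where one of the partial sums below is empty) and split $D=S_L+S_R$, where $S_L$ collects the indices $j\le i$, for which $x-x_j>0$, and $S_R$ the indices $j\ge i+1$, for which $x-x_j<0$. I then group each partial sum into consecutive pairs, using the identity $\frac{(-1)^j}{x-x_j}+\frac{(-1)^{j+1}}{x-x_{j+1}}=(-1)^j\frac{x_j-x_{j+1}}{(x-x_j)(x-x_{j+1})}$. Since $x_j<x_{j+1}$ and the two denominators keep a fixed sign throughout the interval, each pair—and each leftover unpaired term—can be shown to carry the single sign $(-1)^i$. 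Consequently $S_L$, $S_R$, and therefore $D=S_L+S_R$, are nonzero with constant sign $(-1)^i$ on the whole interval, completing the proof.

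The main obstacle is exactly the bounded intervals in task (ii). A naive sign-tracking via the endpoints fails: the dominant term makes $D(x)\to(-1)^i\infty$ both as $x\to x_i^{+}$ and as $x\to x_{i+1}^{-}$, so the two one-sided limits have the \emph{same} sign, and an intermediate-value argument cannot rule out an even number of interior zeros. The pairing computation is what actually forces the sign to be constant, and the only genuine bookkeeping is the parity analysis: one must separate the cases where $S_L$ (resp.\ $S_R$) has an odd or even number of terms and verify that the single unpaired term, when present, still carries sign $(-1)^i$—a check that reduces to the parity of $n-i$.
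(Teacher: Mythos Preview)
The paper does not actually prove this lemma; it merely states it and cites Berrut's original 1988 paper. Your pairing argument is precisely the classical proof given there, and it is correct. (The paper does implicitly rely on the same pairing idea later, in the proof of Lemma~\ref{lemma6}, when bounding $D_k(x)$ from below.)

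One small point worth sharpening when you write it out in full: the claim ``each pair---and each leftover unpaired term---can be shown to carry the single sign $(-1)^i$'' is only true if you pair from the correct end. A pair $(j,j+1)$ always has sign $(-1)^{j+1}$, so in $S_L$ you must start the pairing at index $i$ and work downward (leaving $j=0$ unpaired when $i$ is even), while in $S_R$ you must start at index $i+1$ and work upward (leaving $j=n$ unpaired when $n-i$ is odd). If you pair $S_L$ from $j=0$ upward instead, the pairs carry sign $-(-1)^i$ and the leftover term carries $(-1)^i$, and the argument collapses. You flag the parity bookkeeping as the only real issue, which is right; just make sure the direction of pairing is part of that bookkeeping.
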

\begin{remark}
	In rational interpolations, it is difficult to control the occurrence of  poles in the interval of interpolation.  According to Lemma \ref{denum}, $r_{\text{ Berrut}}(x)$ has no pole in the real line for any distribution of the interpolation points.
\end{remark} 
\begin{remark}
	An interpolation point $x_j$ is called unattainable if the interpolation condition  is not satisfied, i.e., $\frac{p(x_j)}{q(x_j)}\neq f(x_j)$ in \eqref{eq9}. Occurring unattainable points is one of the major flaws of traditional rational interpolants which is not occurred in Berrut's rational interpolant. 
\end{remark}
The error of an interpolation, and its numerically stability  are two important factors which are discussed later after reviewing some definitions such as  \textit{Lebesgue constant} which is one of the best criteria to determine which interpolation point sets are good.
\begin{definition}[Lebesgue Constant \cite{berrut1997,Cheney}]\label{Leb}
	Let $\mathcal{X}_n=\{x_j\}_{j=0}^n$ be a set of distinct interpolation points in the interval $[a,b]$ and  $\mathcal{B}_n=\{\ell_i\}_{i=0}^n$ be a set of basis functions of an interpolant. Assume $\mathcal{L}_{(\mathcal{X}_n,\mathcal{B}_n)}$ is a linear projection, which associates to any continuous function $f\in \mathcal{C}[a,b]$ the unique rational (polynomial) function, i.e., $\mathcal{L}_{(\mathcal{X}_n,\mathcal{B}_n)}f=\frac{p_n}{q_n}\in\mathcal{R}_{n,n}$. Thus, the Lebesgue constant is defined as 
	\begin{align}\label{lebDef}
	\Lambda_n\triangleq\norm{\mathcal{L}_{(\mathcal{X}_n,\mathcal{B}_n)}}=\sup_{f\in \mathcal{C}[a,b]}\frac{\norm{\mathcal{L}_{(\mathcal{X}_n,\mathcal{B}_n)}f}}{\norm{f}},
	\end{align}
	where $\norm{.}$ denotes the maximum norm. 
\end{definition}
\begin{lemma}[\cite{berrut1997}]
	Considering the basis functions $\mathcal{B}_n=\{\ell_i\}_{i=0}^n$, \eqref{lebDef} can be expressed as 
	\begin{align}\label{Lambda}
	\Lambda_n= \max_{x\in[a,b]}\sum_{i=0}^{n}{|\ell_i(x)|},
	\end{align}
	where $\Lambda_n(x)\triangleq\sum_{i=0}^{n}{|\ell_i(x)|}$ is called \emph{Lebesgue function}. 
\end{lemma}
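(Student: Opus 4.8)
The plan is to reduce the operator-norm definition \eqref{lebDef} to the pointwise sum of absolute values of the basis functions, exploiting the fact that $\mathcal{B}_n=\{\ell_i\}_{i=0}^n$ is a cardinal (Lagrange-type) basis, i.e.\ $\ell_i(x_j)=\delta_{ij}$, a property shared by all the interpolants considered above (polynomial, barycentric rational, and Berrut's). Under this property the projection admits the explicit representation $\mathcal{L}_{(\mathcal{X}_n,\mathcal{B}_n)}f=\sum_{i=0}^n f(x_i)\ell_i$, because the right-hand side lies in the interpolation space and matches $f$ at every node, hence coincides with the unique interpolant. With this representation in hand, I would establish the two matching inequalities $\Lambda_n\le\max_{x}\sum_{i}\abs{\ell_i(x)}$ and $\Lambda_n\ge\max_{x}\sum_{i}\abs{\ell_i(x)}$.

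For the upper bound, take any $f\in\mathcal{C}[a,b]$ with $\norm{f}\le 1$, so that $\abs{f(x_i)}\le 1$ at every node. For each $x\in[a,b]$ the triangle inequality gives
\begin{align}
\abs{\mathcal{L}_{(\mathcal{X}_n,\mathcal{B}_n)}f(x)}=\Bigl|\sum_{i=0}^n f(x_i)\ell_i(x)\Bigr|\le\sum_{i=0}^n\abs{f(x_i)}\,\abs{\ell_i(x)}\le\sum_{i=0}^n\abs{\ell_i(x)}.
\end{align}
Taking the maximum over $x$ and then the supremum over such $f$ yields $\Lambda_n\le\max_{x\in[a,b]}\sum_{i=0}^n\abs{\ell_i(x)}$, i.e.\ the maximum of the Lebesgue function.

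The lower bound is where the real work lies. Let $x^\star\in[a,b]$ attain $\max_{x}\sum_{i}\abs{\ell_i(x)}$, which exists since each $\ell_i$ is continuous on the compact interval (the $\ell_i$ have no poles in $[a,b]$, as guaranteed for Berrut's interpolant by Lemma \ref{denum}). I would then construct a single function $f^\star\in\mathcal{C}[a,b]$ with $\norm{f^\star}=1$ reproducing the optimal sign pattern at the nodes, namely $f^\star(x_i)=\operatorname{sgn}(\ell_i(x^\star))$ for each $i\in[n]$; since the nodes $x_0<\dots<x_n$ are distinct, such an $f^\star$ always exists, e.g.\ the piecewise-linear function through the points $(x_i,\operatorname{sgn}(\ell_i(x^\star)))$, whose values remain in $[-1,1]$. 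Evaluating its interpolant at $x^\star$ gives
\begin{align}
\mathcal{L}_{(\mathcal{X}_n,\mathcal{B}_n)}f^\star(x^\star)=\sum_{i=0}^n\operatorname{sgn}(\ell_i(x^\star))\,\ell_i(x^\star)=\sum_{i=0}^n\abs{\ell_i(x^\star)}=\max_{x\in[a,b]}\sum_{i=0}^n\abs{\ell_i(x)},
\end{align}
so $\norm{\mathcal{L}_{(\mathcal{X}_n,\mathcal{B}_n)}f^\star}\ge\max_{x}\sum_{i}\abs{\ell_i(x)}$ while $\norm{f^\star}=1$, forcing $\Lambda_n\ge\max_{x}\sum_{i}\abs{\ell_i(x)}$. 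Combined with the upper bound this proves the equality, and in fact shows the supremum in \eqref{lebDef} is attained, so it is a maximum. The main obstacle is precisely this lower-bound construction: one must exhibit a genuine continuous function with controlled norm (not merely prescribe nodal values) whose interpolant peaks at $x^\star$, and it is the distinctness of the interpolation nodes together with piecewise-linear interpolation that makes this possible.
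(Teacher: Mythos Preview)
Your argument is correct and is in fact the standard proof of this identity: the upper bound via the triangle inequality, and the lower bound by constructing a continuous witness $f^\star$ with $\norm{f^\star}=1$ matching the sign pattern of the $\ell_i(x^\star)$ at the nodes. The one cosmetic point is that the piecewise-linear interpolant through $(x_i,\operatorname{sgn}(\ell_i(x^\star)))$ is a priori defined only on $[x_0,x_n]$, so you should extend it constantly to all of $[a,b]$; this preserves $\norm{f^\star}\le 1$.

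As for comparison with the paper: the paper does not supply a proof of this lemma at all. It is stated with a citation to \cite{berrut1997} and used as a black box in the subsequent analysis. Your write-up therefore fills in a gap the authors chose to delegate to the literature, and the route you take is exactly the classical one found in approximation-theory texts.
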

According to the definition, the best choices of interpolation points are the ones that have a small Lebesgue constant for an interpolant  \cite{rivlin1981introduction}.

\begin{theorem}[Lebesgue Theorem \cite{berrut1997,shortCourse}]\label{LebesgueTh}
	Suppose there is a set of basis functions of an interpolant $\mathcal{B}_n=\{\ell_i\}_{i=0}^n$, and a set of interpolation points ${\mathcal{X}}_n=\{{x}_i\}_{i=0}^n$  in the interval $[a,b]$. Then,
	the error of the approximation of any function $f\in \mathcal{C}[a,b]$  using the  rational interpolant, i.e., $\mathcal{L}_{(\mathcal{X}_n,\mathcal{B}_n)}f=\frac{p_n}{q_n}\in\mathcal{R}_{n,n}$ is bounded from above as 
		\begin{align}\label{error}
	\norm{f-\mathcal{L}_{(\mathcal{X}_n,\mathcal{B}_n)}f}\le (1+\Lambda_n)\min\limits_{r\in\mathcal{Q}}\norm{f-r},
	\end{align}
	where $\mathcal{Q}$ is defined as the set of all rational functions in $\mathcal{R}_{n,n}$  passing through the interpolation points, with denominator $q_n$.
\end{theorem}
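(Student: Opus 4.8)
The plan is to run the classical ``add and subtract the best approximant'' argument that underlies every Lebesgue-type inequality, so that the whole proof collapses to linearity, the triangle inequality, and the definition of $\Lambda_n$ as an operator norm. The one structural fact I would isolate first is already built into Definition \ref{Leb}: $\mathcal{L}_{(\mathcal{X}_n,\mathcal{B}_n)}$ is a \emph{linear projection}, so it acts as the identity on its range. Since every $r\in\mathcal{Q}$ has denominator $q_n$ and numerator of degree at most $n$, it lies in the range of $\mathcal{L}_{(\mathcal{X}_n,\mathcal{B}_n)}$ (as $f$ runs over $\mathcal{C}[a,b]$ the samples $f(x_i)$ run over all of $\mathbb{R}^{n+1}$, so the numerator runs over all of $\Pi_n$), and a projection fixes its range; hence $\mathcal{L}_{(\mathcal{X}_n,\mathcal{B}_n)}r=r$. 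This idempotence is the hinge of the entire argument.

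With this identity available, let $r^\star\in\mathcal{Q}$ attain $\min_{r\in\mathcal{Q}}\norm{f-r}$; such a minimizer exists because $\mathcal{Q}$ is parametrized by the finite-dimensional numerator space $\Pi_n$, and alternatively one may keep a generic $r^\star\in\mathcal{Q}$ and pass to the infimum only at the end. Using $r^\star=\mathcal{L}_{(\mathcal{X}_n,\mathcal{B}_n)}r^\star$ together with linearity of $\mathcal{L}_{(\mathcal{X}_n,\mathcal{B}_n)}$, I would write
\begin{align}
f-\mathcal{L}_{(\mathcal{X}_n,\mathcal{B}_n)}f=(f-r^\star)-\mathcal{L}_{(\mathcal{X}_n,\mathcal{B}_n)}(f-r^\star).
\end{align}
Taking the maximum norm and applying the triangle inequality gives $\norm{f-\mathcal{L}_{(\mathcal{X}_n,\mathcal{B}_n)}f}\le\norm{f-r^\star}+\norm{\mathcal{L}_{(\mathcal{X}_n,\mathcal{B}_n)}(f-r^\star)}$. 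Finally, by the definition of the Lebesgue constant as the operator norm in \eqref{lebDef}, the second summand is bounded by $\Lambda_n\norm{f-r^\star}$, so that $\norm{f-\mathcal{L}_{(\mathcal{X}_n,\mathcal{B}_n)}f}\le(1+\Lambda_n)\norm{f-r^\star}=(1+\Lambda_n)\min_{r\in\mathcal{Q}}\norm{f-r}$, which is exactly \eqref{error}.

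The hard part, such as it is, lies entirely in the projection identity $\mathcal{L}_{(\mathcal{X}_n,\mathcal{B}_n)}r^\star=r^\star$: this is the only place where the interpolation structure (rather than the mere operator-norm bound) enters, and it relies on the fact that an element of $\mathcal{R}_{n,n}$ with the prescribed denominator $q_n$ is uniquely recovered from its samples at the $n+1$ nodes. Everything after that is a two-line estimate, so I do not expect any genuinely delicate calculation; the real content of the theorem is the clean separation of the interpolation error into a ``best possible'' term $\min_{r\in\mathcal{Q}}\norm{f-r}$ and an amplification factor $1+\Lambda_n$ governed solely by the basis and node choice.
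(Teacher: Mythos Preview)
Your argument is correct and is exactly the classical Lebesgue-inequality proof: split $f-\mathcal{L}_{(\mathcal{X}_n,\mathcal{B}_n)}f=(f-r^\star)-\mathcal{L}_{(\mathcal{X}_n,\mathcal{B}_n)}(f-r^\star)$ using the projection identity $\mathcal{L}_{(\mathcal{X}_n,\mathcal{B}_n)}r^\star=r^\star$, then bound by $(1+\Lambda_n)\norm{f-r^\star}$ via the operator-norm definition of $\Lambda_n$ in \eqref{lebDef}. The paper itself does not supply a proof of Theorem~\ref{LebesgueTh}---it is stated as a cited result from \cite{berrut1997,shortCourse}---so there is no in-paper argument to compare against; your write-up is the standard one those references contain.
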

According to Theorem \ref{LebesgueTh}, for a particular interpolant, if we have properly distributed interpolation points that cause the smaller Lebesgue constant, then we will have a better interpolation of $f$. 

Now, consider Berrut's interpolation as a rational interpolant of $f_k$ at $x_k,k\in[n]$. There are several results based on the Lebesgue constant computation of Berrut's interpolant in different sets of interpolation points, which prove that Berrut's interpolation is extremely well-conditioned \cite{bos2013bounding,zhang2014improved}. In the following, we review an important result in this context. 
\begin{definition}[Well-Spaced Points \cite{bos2013bounding}]\label{wellSpaced}
	Let $\mathcal{X}_n=\{x_i\}_{i=0}^{n}$ be a set of ordered distinct interpolation points. Consider a family of sets, i.e., $\mathcal{X}=(\mathcal{X}_n)_{n\in\mathbb{N}}$, if there exist constants $C,R\ge1$ such that the following conditions 
	\begin{align*}
	(1)\hspace{0.15cm} &\frac{x_{k+1}-x_k}{x_{k+1}-x_j}\le\frac{C}{k+1-j},&&\text{for}j=[k],k=[n-1],\\
	(2)\hspace{0.15cm}&\frac{x_{k+1}-x_k}{x_j-x_k}\le\frac{C}{j-k}&&\text{for} j=[k+1:n],k=[n-1],\\
	(3)\hspace{0.15cm}&\frac{1}{R}\le\frac{x_{k+1}-x_k}{x_k-x_{k-1}}\le R,&&\text{for}k=[1:n-1],
	\end{align*}
	are satisfied, then $\mathcal{X}=(\mathcal{X}_n)_{n\in\mathbb{N}}$ is called a family of well-spaced points. Note that $R$ and $C$ must be independent of $n$.
\end{definition}
\begin{theorem}[\cite{bos2013bounding}]\label{lebesgue}
	Suppose we have a family of well-spaced points $\mathcal{X}=(\mathcal{X}_n)_{n\in\mathbb{N},n\ge2}$, with constant parameters $R,C\ge1$, where $\mathcal{X}_n=\{x_i\}_{i=0}^n$ is the set of interpolation points in the interval $[a,b]$. If Berrut's interpolant in \eqref{berrut} is used to interpolate function $f\in \mathcal{C}[a,b]$ at the points $\mathcal{X}_n$, then the Lebesgue constant under these assumption is bounded as 
	\begin{align}
	\Lambda_n\le(R+1)(1+2C\ln n).
	\end{align} 
\end{theorem}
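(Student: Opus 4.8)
The plan is to bound the Lebesgue \emph{function} $\Lambda_n(x)=\sum_{i=0}^{n}|\ell_{i,\text{Berrut}}(x)|$ pointwise and then take the maximum, invoking \eqref{Lambda}. Since $|(-1)^i|=1$, the common denominator of the Berrut basis functions factors out of every absolute value, so
\[
\Lambda_n(x)=\frac{\sum_{j=0}^{n}\frac{1}{|x-x_j|}}{\left|\sum_{j=0}^{n}\frac{(-1)^j}{x-x_j}\right|}=:\frac{N(x)}{D(x)},
\]
which is well defined off the nodes because the denominator never vanishes (Lemma \ref{denum}), while $\Lambda_n(x_j)=1$ at the nodes. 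I would therefore fix an index $k$ and a point $x\in(x_k,x_{k+1})$ and bound $N(x)/D(x)$ uniformly in $k$ and $x$.

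The crux is a sharp lower bound on $D(x)$. Split the denominator sum into the left part $S_L=\sum_{j=0}^{k}(-1)^j/(x-x_j)$, with positive denominators, and the right part $S_R=\sum_{j=k+1}^{n}(-1)^j/(x-x_j)$, with negative ones. Reading each sum from the node nearest $x$ (from $j=k$ in $S_L$, from $j=k+1$ in $S_R$), it is an alternating series whose term magnitudes $1/|x-x_j|$ \emph{decrease} monotonically as $j$ moves away from $x$; hence each partial sum inherits the sign of its leading term, and a short check shows both signs equal $(-1)^k$. Consequently there is \emph{no cancellation between the two groups} and $D(x)=|S_L|+|S_R|$. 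The leading-minus-next-term estimate for an alternating decreasing series gives $|S_L|\ge\frac{1}{x-x_k}-\frac{1}{x-x_{k-1}}=\frac{1}{x-x_k}\cdot\frac{x_k-x_{k-1}}{x-x_{k-1}}$; bounding the last factor below via the gap-ratio condition~(3) of Definition~\ref{wellSpaced} yields $|S_L|\ge\frac{1}{(R+1)(x-x_k)}$, and symmetrically $|S_R|\ge\frac{1}{(R+1)(x_{k+1}-x)}$ (with the obvious simplifications when $k=0$ or $k=n-1$). Adding,
\[
D(x)\ge\frac{1}{R+1}\left(\frac{1}{x-x_k}+\frac{1}{x_{k+1}-x}\right)=\frac{x_{k+1}-x_k}{(R+1)(x-x_k)(x_{k+1}-x)}.
\]

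For the upper bound I would split $N(x)$ into the two central terms $j\in\{k,k+1\}$ and the rest. The central pair equals $\frac{1}{x-x_k}+\frac{1}{x_{k+1}-x}$, so by the display above its ratio to $D(x)$ is at most $R+1$, which supplies the constant term in the claim. For each remaining term I would factor $\frac{(x-x_k)(x_{k+1}-x)}{(x_{k+1}-x_k)\,|x-x_j|}$ into a piece that is trivially at most $1$ (namely $\frac{x_{k+1}-x}{x_{k+1}-x_k}$ or $\frac{x-x_k}{x_{k+1}-x_k}$) times a ratio $\frac{x-x_k}{x-x_j}$ ($j<k$) or $\frac{x_{k+1}-x}{x_j-x}$ ($j>k+1$); the latter is monotone in $x$, so replacing $x$ by the nearer endpoint and applying conditions~(1) and~(2) of Definition~\ref{wellSpaced} bounds it by $\frac{C}{k+1-j}$ and $\frac{C}{j-k}$ respectively. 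Dividing $N(x)$ by the lower bound on $D(x)$ and summing the non-central contributions produces the harmonic tails $C(H_{k+1}-1)$ and $C(H_{n-k}-1)$; since $H_m\le 1+\ln m$ and $k+1,\,n-k\le n$, these total at most $2C\ln n$. Collecting the central and tail contributions gives $\Lambda_n(x)\le(R+1)(1+2C\ln n)$ uniformly, which is the claim.

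The main obstacle is precisely the lower bound on $D(x)$: one must simultaneously rule out destructive cancellation (the sign argument showing $S_L$ and $S_R$ point the same way) \emph{and} ensure the dominant nearest-node term is not nearly annihilated by its neighbor, which is exactly what the uniform gap-ratio bound $R$ in condition~(3) prevents. Once $D(x)$ is controlled from below by the two nearest reciprocals, everything else reduces to folding conditions~(1)--(2) into harmonic sums.
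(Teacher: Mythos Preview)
The paper does not supply its own proof of this theorem; it is quoted from \cite{bos2013bounding} and used as a black box, so there is no in-paper argument to compare against. Your proposal is correct and is essentially the proof given in that reference: rewrite $\Lambda_n(x)$ as a ratio $N(x)/D(x)$, use the alternating-sign structure of the Berrut weights to show the left and right partial sums of $D(x)$ share the sign $(-1)^k$ and are each bounded below by their leading-minus-next term, invoke condition~(3) to get $D(x)\ge\frac{1}{R+1}\bigl(\frac{1}{x-x_k}+\frac{1}{x_{k+1}-x}\bigr)$, and then use conditions~(1)--(2) to control the non-central numerator terms by harmonic tails summing to at most $2C\ln n$. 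The edge cases $k=0$ and $k=n-1$ are indeed trivial since then one of $S_L,S_R$ is a single term and the bound improves. Nothing is missing.
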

Next we explain about the approximation error and convergence rate of the Berrut's rational interpolant.
\begin{theorem}[\cite{floater2007barycentric}]\label{error1}
	Assume Berrut's rational interpolant $r_{\text{Berrut}}$ in \eqref{berrut} as the interpolation formula for a continuous function $f\in \mathcal{C}[a,b]$ with second derivative (i.e., $f\in \mathcal{C}^2[a,b]$). Then, we have 
	\begin{align*}
	\norm{r_{\text{Berrut}}(x)-f(x)}\le h(1+\lambda)(b-a)\frac{\norm{f^{\prime\prime}(x)}}{2},
	\end{align*}
	if $n$ is odd, and 
	\begin{align*}
	\norm{r_{\text{Berrut}}(x)-f(x)}\le h(1+\lambda)\bigg((b-a)\frac{\norm{f^{\prime\prime}(x)}}{2}+\norm{f^{\prime}(x)} \bigg),
	\end{align*}
	if $n$ is even. In these inequalities,  $h\triangleq\max_{0\le i\le n-1}({x_{i+1}-x_{i}})$ for ordered interpolation points. In addition, $\lambda$ is defined as 
	\begin{align*}
	\lambda\triangleq \max_{1\le i \le n-2}\min\{\frac{x_{i+1}-x_i}{x_i-x_{i-1}},\frac{x_{i+1}-x_i}{x_{i+2}-x_{i+1}} \},
	\end{align*}  
	and is referred as local mesh ratio.
\end{theorem}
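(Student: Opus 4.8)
The plan is to fix an arbitrary $x\in(x_k,x_{k+1})$ (the error vanishes at the nodes themselves, since $r_{\text{Berrut}}$ interpolates $f$), write the error as a single quotient, and then bound its numerator from above and its denominator from below, both uniformly in $x$. Since the Berrut basis functions sum to one, one gets
\begin{align*}
r_{\text{Berrut}}(x)-f(x)&=\frac{N(x)}{D(x)},\\
N(x)&\defeq\sum_{i=0}^{n}\frac{(-1)^i}{x-x_i}\big(f_i-f(x)\big),\quad D(x)\defeq\sum_{j=0}^{n}\frac{(-1)^j}{x-x_j},
\end{align*}
so the whole statement reduces to an upper bound on $\abs{N(x)}$ and a lower bound on $\abs{D(x)}$. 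Note that Lemma \ref{denum} already guarantees $D(x)\neq 0$, so the quotient is well defined.

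For the numerator I would rewrite each summand through a first divided difference, $\frac{f_i-f(x)}{x-x_i}=-f[x,x_i]$, so that $N(x)=-\sum_{i=0}^n(-1)^i f[x,x_i]$, and then exploit the alternating sign by pairing consecutive terms. Each pair telescopes into a second divided difference, $f[x,x_{2m}]-f[x,x_{2m+1}]=(x_{2m}-x_{2m+1})f[x,x_{2m},x_{2m+1}]$, which by the mean value theorem for divided differences equals $\tfrac12(x_{2m}-x_{2m+1})f''(\xi_m)$ because $f\in\mathcal{C}^2[a,b]$. This is exactly where the parity of $n$ enters: when $n$ is odd the $n+1$ terms pair up perfectly, so $\abs{N(x)}\le\frac{\norm{f''}}{2}\sum_m (x_{2m+1}-x_{2m})\le\frac{(b-a)\norm{f''}}{2}$, the paired subintervals being disjoint; when $n$ is even one term is left unpaired, contributing a single first divided difference $f[x,x_n]=f'(\eta)$, whence the extra $\norm{f'}$ term.

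The main obstacle is the uniform lower bound on $\abs{D(x)}$, and this is precisely where the local mesh ratio $\lambda$ is produced. After normalizing the sign by setting $S(x)\defeq(-1)^kD(x)$, the two ``straddling'' terms $j=k,k+1$ combine into the strictly positive quantity $\frac{1}{x-x_k}+\frac{1}{x_{k+1}-x}$, while the remaining terms split into a left tail over $j\le k-1$ and a right tail over $j\ge k+2$. Each tail is an alternating sum of monotonically decreasing magnitudes, so the standard alternating-series estimate bounds the left tail below by $-\frac{1}{x-x_{k-1}}$ and the right tail by $-\frac{1}{x_{k+2}-x}$. Writing $s=x-x_k$, $t=x_{k+1}-x$ and $h_i=x_{i+1}-x_i$, this regroups $S(x)$ into the two nonnegative brackets $\frac1s-\frac{1}{s+h_{k-1}}$ and $\frac1t-\frac1{t+h_{k+1}}$; discarding either one and using $s,t\le h_k\le h$ yields the two bounds $S(x)\ge \frac{1}{h}\,(1+h_k/h_{k-1})^{-1}$ and $S(x)\ge\frac{1}{h}\,(1+h_k/h_{k+1})^{-1}$. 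Keeping the larger of the two replaces $h_k/h_{k-1}$ and $h_k/h_{k+1}$ by their minimum, which is at most $\lambda$ by definition, giving $\abs{D(x)}=S(x)\ge\frac{1}{(1+\lambda)h}$ (the boundary intervals $k=0,n-1$ are easier, one tail being empty).

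Combining the two estimates gives $\norm{r_{\text{Berrut}}-f}=\sup_x\frac{\abs{N(x)}}{\abs{D(x)}}\le (1+\lambda)h\,\sup_x\abs{N(x)}$, which reproduces the stated odd and even inequalities. I expect the delicate point to be the alternating-series control of the two tails together with the ``keep the better side'' step, since it is exactly this manipulation that converts the two neighbouring spacing ratios into the single factor $1+\lambda$; by contrast, the numerator estimate and the parity bookkeeping are comparatively routine once the divided-difference pairing is in place.
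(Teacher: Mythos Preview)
The paper does not supply its own proof of this statement: Theorem~\ref{error1} is quoted verbatim from \cite{floater2007barycentric} and is used as a black box (for instance in the proof of Theorem~\ref{thBound}). So there is nothing in the paper to compare against.

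That said, your proposal is correct and is essentially the original Floater--Hormann argument. The decomposition $r_{\text{Berrut}}(x)-f(x)=N(x)/D(x)$, the pairing of the alternating sum $-\sum_i(-1)^i f[x,x_i]$ into second divided differences (with one first divided difference left over when $n$ is even), and the lower bound on $\abs{D(x)}$ via the alternating-tail estimate are exactly the steps in \cite{floater2007barycentric} specialised to the Berrut case $d=0$. Your handling of the denominator is right: for $x\in(x_k,x_{k+1})$ with $1\le k\le n-2$, each bracket $\tfrac{1}{s}-\tfrac{1}{s+h_{k-1}}$ and $\tfrac{1}{t}-\tfrac{1}{t+h_{k+1}}$ is nonnegative, and since $s,t\le h_k\le h$ each bracket alone already dominates $\tfrac{1}{h}(1+h_k/h_{k-1})^{-1}$ respectively $\tfrac{1}{h}(1+h_k/h_{k+1})^{-1}$; taking the larger of the two produces exactly $\tfrac{1}{h}(1+\min\{h_k/h_{k-1},h_k/h_{k+1}\})^{-1}\ge \tfrac{1}{h(1+\lambda)}$. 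The boundary intervals $k=0,n-1$ are indeed easier, since the surviving straddling term $\tfrac{1}{s}$ (or $\tfrac{1}{t}$) is already at least $\tfrac{1}{h}$. Nothing is missing.
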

Thus $r_{\text{Berrut}}(x)$ converges to $f(x)$ at the rate of $\mathcal{O}(h)$  under the assumption that $f\in \mathcal{C}^2[a,b]$, and provided that the local mesh ratio $\lambda$ is bounded as $h\to 0$ which depends on the distribution of interpolation points.

The error caused by floating-point arithmetic is really significant in interpolation problems. Numerical stability of an algorithm is a measure to determine the sensitivity of its output caused by small changes in the input data. 
\begin{definition}
	Assume $\tilde{f}$ is an interpolant which interpolates function $f$. If for some small  \textit{backward error} $\delta_1>0$, and for any $x\in\mathbb{R}$, there exist some $|\delta_x|\le\delta_1$ such that $	\tilde{f}(x)=f(x+\delta_x),$
	then, this interpolant is called $\delta_1$-\textit{backward stable}. In addition, if for some small \textit{forward error} $\delta_2>0$, we have  $\frac{\norm{\tilde{f}-f}}{\norm{f}}=\delta_2$, then, this interpolation is called $\delta_2$-\textit{forward stable}.
\end{definition}
In other words, a backward stable interpolant gives the right value
of $f$ at the approximately right value of $x$, and a forward stable interpolant
provides an output that is close enough to the right result.
\begin{remark}
	In \cite{higham}, it is shown that \eqref{bary} is forward stable for any set of interpolating nodes as long as the Lebesgue constant is not too large. Also, in\cite{mascarenhas2014backward}, it is shown that \eqref{bary} and \eqref{rational} are backward stable as well  when the constant $\Lambda_n$ remains small. 
\end{remark}
\subsection{An overview of Lagrange Coded Computing}
Lagrange coded computing (LCC) \cite{lagrange} is designed to calculate an arbitrary polynomial function $p(X)$ for $K$ inputs $X_0, \ldots, X_{K-1}$, over a cluster of $N+1$ servers. It is based on the following three steps: 
\begin{enumerate}
	\item
	The master node forms polynomial $u(z)$, such that $u(\alpha_k)=X_k$, using Lagrange interpolation, for some distinct values of $\alpha_k$, $k \in [K-1]$. 
	\item
	The master node calculates $u(\beta_n)$, and sends to worker node $n$ to calculate $p(u(\beta_n))$, $n\in [N]$, for some distinct values of $\beta_n$, $n \in [N]$.
	\item
	The master node recovers $g(z)=p(u(z))$, upon receiving $\deg(g(z))+1=(K-1)\deg(p(X))+1$ answers from the workers nodes. Then it calculates $p(X_k)$ as $g(\alpha_k)= p(u(\alpha_k))$, $k\in [K-1]$.
\end{enumerate}
 The advantage of this approach is that having the results of any arbitrary subsets of the workers nodes of size $(K-1)\deg(p(X))+1$, the master node can calculate $p(X_0), \ldots,p(X_{K-1})$. Thus it can tolerate up to $N- (K-1)\deg(p(X))$ stragglers. Lagrange coded computing suffers from several problems,  when it is used for computing over real numbers. 
\begin{enumerate}
	\item Its application is limited to polynomial computations. 
	\item 	In Lagrange coded computing, the total number of worker nodes that the master node needs to wait for to recover the final result is proportional to the degree of the polynomial times the size of the input data set, which can be prohibitively large. In other words, if the number of non-straggling worker nodes is less than $(K-1)\deg{f}+1$, the final results can not be computed.
	\item It is originally designed for computations over the finite field. This approach is not proper for computation over real numbers and faces serious problems in terms of computation instability.
\end{enumerate}
\section{The Proposed Scheme}\label{propose}
As explained, existing coded computing approaches have some major  challenges in distributed computing. To overcome these challenges, we propose \emph{Berrut Approximated Coded Computing} to approximately  evaluate \emph{any arbitrary function} using a distributed system when the data and all operations are in the field of \emph{real numbers}. This scheme is  \emph{numerically stable}  with \emph{low computational complexity}, which can be used in problems such as distributed learning. In this scheme, we propose a different encoding and decoding method. We also suggest particular points for encoding the input data set and assigning the tasks to the worker nodes. The accuracy of the approximation established theoretically and verified by simulation results in different settings, such as distributed learning problems.
\\The objective is to approximately evaluate ``\emph{an arbitrary function}" $f:\mathbb{V}\to\mathbb{U}$ over an input data set $\mathbf{X}=(\mathbf{X}_0,...,\mathbf{X}_{K-1})$ in a numerically stable manner with bounded errors, where $\mathbb{V}$ and $\mathbb{U}$ are the set of the real matrices.  A distributed system  with one master node and $N+1$ worker nodes $\mathcal{W}_0,\dots\mathcal{W}_N$ is utilized to approximately compute the evaluation of $f$ over data set $\mathbf{X}$, i.e., $\tilde{\mathbf{Y}}_i\approx f(\mathbf{X}_i)$ for $i\in[K-1]$. Also, assume that in the distributed system there may be some straggling worker nodes. The proposed straggler resistant scheme is based on the following steps:\\
\textbf{Step 1.} 
The master node creates the coded data $\hat{\mathbf{X}}_i=\mathcal{E}_i(\mathbf{X})$ and assigns it to $i$th worker node,  where $\mathcal{E}_i$ is an encoding function that maps the raw data $(\mathbf{X}_0,...,\mathbf{X}_{K-1})$ to the coded matrix $\hat{\mathbf{X}}_i$ for $i$th worker node. More precisely, the master node forms the following rational function $u:\mathbb{R}\to\mathbb{V}$ 
\begin{align}\label{coding}
u(z)=\sum_{i=0}^{K-1}{ \frac{{\frac{{(-1)}^i}{(z-\alpha_i)}}}{\sum_{j=0}^{K-1}\frac{{(-1)}^j}{(z-\alpha_j)}}\mathbf{X}_i},
\end{align}
for some distinct values $\alpha_0,\dots\alpha_{K-1}\in \mathbb{R}$.
One can verify that $u(\alpha_j)=\mathbf{X}_j$, for all $j\in[K-1]$.\\
In this scheme, we suggest to choose  $\alpha_j, j\in[K-1]$,  as \emph{Chebyshev points of the first kind} as 
\begin{align}
\alpha_j=\cos(\frac{(2j+1)\pi}{2K}),\hspace{1cm}j\in[K-1]. 
\end{align}
\textbf{Step 2.} 
The master node assigns $\hat{\mathbf{X}}_i=u(z_i)$ to $i$th worker node to apply $f$ on $\hat{\mathbf{X}}_i$ and send the result back. In the proposed scheme, we suggest to choose $z_i, i\in[N]$, as \emph{Chebeshev points of the second kind}, i.e., 
\begin{align}
z_i=\cos{\frac{i\pi}{N}}, \hspace{1cm}i\in[N].
\end{align}	
Having received $\hat{\mathbf{X}}_i$ from the master node, the $i$th worker node computes $\hat{\mathbf{Y}}_i=f(\hat{\mathbf{X}}_i)$. Then it returns the result to the master node.\\
\textbf{Step 3.} 	
The master node waits for the results from the set of fastest worker nodes, denoted by $\mathcal{F}$. Then it approximately calculates  $f(\mathbf{X}_i)$, $i\in[K-1]$, from $\big\{\hat{\mathbf{Y}}_j\big\}_{j\in\mathcal{F}}$, using the decoding function $\mathcal{D}\big(\big\{\hat{\mathbf{Y}}_j\big\}_{j\in\mathcal{F}},\mathcal{F}\big)$. The decoding function is based on Berrut's rational interpolant, with computational complexity of $\mathcal{O}(|\mathcal{F}|)$.\\
In other words, ‌the master node, after receiving outcomes of non-straggling worker nodes, creates a rational function which approximately interpolates $f(u(z))$  as 
\begin{align}\label{inter}
r_{\text{Berrut},\mathcal{F}}(z)=\sum_{i=0}^{n}\frac{{\frac{{(-1)}^i}{(z-\tilde{z}_i)}}}{\sum_{j}\frac{{(-1)}^j}{(z-\tilde{z}_j)}}f(u(\tilde{z}_i)),
\end{align} 
where $\tilde{z}_i\in\mathcal{S}, i\in[n]$ are the interpolation points, where $\mathcal{S}=\{\cos{\frac{j\pi}{N}}, j\in \mathcal{F}\}$, and $n\triangleq |\mathcal{F}|-1$.  
Now the master node then approximately computes $f(\mathbf{X}_i)\approx r_{\text{ Berrut},\mathcal{F}}(\alpha_i),i\in[K-1]$.
\begin{remark}
	In this scheme, there is no strict notion of recovery threshold or the minimum number of required computation results from worker nodes. The master node uses the available results of the computations sent by non-straggling worker nodes and computes the final results. The more the number of results is, the more accurate the final result will be.
\end{remark}
\begin{remark}
	The application of BACC is not limited to polynimial functions, and this scheme can be used to approximately evaluate any arbitrary functions. 
\end{remark}
\begin{remark}
	In this scheme, we suggest to choose  $\alpha_j, j\in[K-1]$,  as Chebyshev points of the first kind, and we suggest to choose $z_i, i\in[N]$, as Chebyshev points of the second kind. 
	In Theorem \ref{myWellSpaced}, we will prove that the Lebesgue constant for Berrut's rational interpolant grows logarithmically in the size of  \emph{a subset of Chebyshev points}.
\end{remark}
\begin{remark}
	In BACC, we suggest using Berrut's rational interpolant rather than barycentric interpolant for the decoding step.  Because of the stragglers, the master node faces a subset of Chebyshev points as the interpolation points rather than the entire set. If we had the entire set, calculating $w_i$ would have a well-behaved explicit formula as \eqref{wcheby}. However, when we have a subset of them, we need to use the general formula \eqref{w} to calculate $w_i$. Using \eqref{w} itself is not numerically stable in practice. The reason is that, according to  Remark \ref{remark6}, any subset of Chebyshev points is not necessarily a set of properly distributed interpolation points for polynomial interpolants. Thus, we use Berrut's rational interpolant.
\end{remark}
\begin{remark}
	In the polynomial interpolation, the errors caused by floating-point arithmetic are significant, and the barycentric formula has a  good performance in this respect. However, the barycentric representation is not well-conditioned for some distribution of the interpolation points. In particular, even barycentric interpolation in equidistant points faces strange behavior called Runge phenomenon \cite{runge1901empirische}, which is a problem of large oscillations near the endpoints.
	In such cases, no matter what formulation is used, polynomial interpolation is not recommended for interpolation.  Thus, for Lagrange coded computing, in the encoding step, we recommend to use barycentric interpolation. In addition, the popular equidistant points are not recommended.
\end{remark}
\section{Analytical Guarantees}\label{V}
In order to guarantee that the proposed interpolation points and the approximation result are acceptable, we establish the following theorems.
\begin{lemma}\label{lemma6}
	Assume $\mathcal{X}_n=\{x_j\}_{j=0}^{n}$ is a subset of  $\tilde{\mathcal{X}}_{N}=\{\tilde{x}_k\}_{k=0}^{N}$ with $n+1$ elements such that $x_0>x_1>\dots>x_{n}$, where $n=N-s$, and $\tilde{x}_k,k\in[N]$ are the Chebyshev points of the second kind, i.e., $\tilde{x}_k=\cos{\frac{k\pi}{N}},k\in[N]$, and $s$ is a constant number independent of $N$. The Lebesgue function associated with Berrut’s interpolant in $\mathcal{X}_n=\{x_j\}_{j=0}^{n}$ attains its maximum if there exist $\bar{k}$ such that $x_{j}=\tilde{x}_{j}=\cos{\frac{j\pi}{N}}$ for $j\in [\bar{k}]$ and $x_{j}=\tilde{x}_{j+s+1}=\cos{\frac{(j+s+1)\pi}{N}}$ for $j\in[\bar{k}+1:N-s]$, i.e., that all $s$ elements not included in $\mathcal{X}_n$ are ordered consecutively in $\tilde{\mathcal{X}}_{N}$.
\end{lemma}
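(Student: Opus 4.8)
The plan is to view this as an extremal problem over the finite family of admissible subsets $\mathcal{X}_n\subset\tilde{\mathcal{X}}_N$ of cardinality $n+1$, and to prove by a rearrangement (exchange) argument that the Lebesgue constant $\Lambda_n=\max_{x}\Lambda_n(x)$ is largest precisely when the $s$ deleted nodes form one consecutive block in $\tilde{\mathcal{X}}_N$. Writing out the Berrut basis functions, the Lebesgue function in \eqref{Lambda} becomes
\[
\Lambda_n(x)=\sum_{i=0}^{n}\left|\ell_{i,\text{Berrut}}(x)\right|=\frac{\sum_{i=0}^{n}\frac{1}{|x-x_i|}}{\left|\sum_{j=0}^{n}\frac{(-1)^j}{x-x_j}\right|}.
\]
The key structural remark is that the sign $(-1)^i$ attached to a retained node depends only on its rank $i$ among the retained nodes, so any two \emph{adjacent} retained nodes always carry opposite signs, regardless of how many original points were removed between them. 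Hence the alternating sign pattern is identical for every admissible $\mathcal{X}_n$, and $\Lambda_n$ depends on the choice of $\mathcal{X}_n$ only through the node positions, i.e.\ through the gap structure. The question thus reduces to deciding which pattern of $s$ deletions from the equi-spaced-in-angle nodes $\tilde{x}_k=\cos(k\pi/N)$ produces the gap structure maximizing $\max_x\Lambda_n(x)$.

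Next I would localize the maximum. Since $\Lambda_n(x_i)=1$ at every node and $\Lambda_n$ is continuous, the global maximum lies in the interior of some gap $(x_{k+1},x_k)$. Lemma~\ref{denum} guarantees that the denominator $\sum_j (-1)^j/(x-x_j)$ keeps a constant sign on each such open interval, and on $(x_{k+1},x_k)$ its two nearest terms, carrying opposite signs at nodes on opposite sides of $x$, add constructively. I would use this to estimate the peak of $\Lambda_n$ inside a gap and to show it is an increasing function of the width of that gap relative to its neighbours: the farther terms, which are the ones that push the ratio above $1$, dominate more strongly the wider the gap. Consequently the global maximum is governed by the single widest gap.

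The exchange step then runs as follows. If the deleted set is not a single block, some deleted node is separated from the widest gap by at least one retained node. I would move that deleted node toward the block bordering the widest gap, thereby enlarging the dominant gap while narrowing a smaller one. Re-evaluating $\Lambda_n$ near the dominant peak, the enlargement boosts the far-term contribution relative to the two neighbouring terms, whereas the only loss occurs in a subdominant gap whose peak is already smaller; hence the move cannot decrease $\Lambda_n$. Iterating this move (an induction on the number of deleted blocks) collapses all deletions into a single consecutive block, which is exactly the configuration with $x_j=\tilde{x}_j$ for the low-rank indices and $x_j$ shifted by the block size beyond the gap, as asserted.

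The delicate point, which I expect to be the main obstacle, is making the exchange quantitative. Relocating one deleted node perturbs a single term in both the positive numerator sum and the alternating denominator sum, and one must show the net effect on the ratio $\Lambda_n$ at the relevant evaluation point is nonnegative. The constant-sign property of Lemma~\ref{denum} keeps $\left|\sum_j (-1)^j/(x-x_j)\right|$ trackable and prevents the denominator from vanishing, but the genuine difficulty lies in comparing the competing peaks of two gaps simultaneously as one is widened at the expense of the other; this is where I expect the equi-angular spacing of the Chebyshev points to be essential in controlling the positions and the resulting term-by-term estimates.
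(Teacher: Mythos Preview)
Your outline heads in the right qualitative direction—reduce to the gap structure, argue the peak lives in the widest gap, and collapse the deletions into a single block—but the route the paper takes is different and considerably more direct, and it sidesteps exactly the exchange step you flag as the main obstacle.

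The paper does not attempt any rearrangement argument. Instead, fixing $x\in(x_{k+1},x_k)$, it multiplies both the numerator and the denominator of $\Lambda_n(x)$ by the common positive factor $(x_k-x)(x-x_{k+1})$ and analyzes the two resulting quantities
\[
N_k(x)=(x_k-x)(x-x_{k+1})\sum_{i=0}^{n}\frac{1}{|x-x_i|},\qquad
D_k(x)=(x_k-x)(x-x_{k+1})\Bigl|\sum_{j=0}^{n}\frac{(-1)^j}{x-x_j}\Bigr|,
\]
separately. This multiplication kills the two near singularities at $x_k$ and $x_{k+1}$, so both $N_k$ and $D_k$ admit clean algebraic bounds: $N_k(x)$ is bounded above by $(x_k-x_{k+1})$ times a factor involving only ratios of the form $(x_k-x_{k+1})/(x_i-x_{k+1})$ and $(x_k-x_{k+1})/(x_k-x_i)$, and the alternating sum in $D_k(x)$ is paired into positive blocks except for the single dominant term $-(x_k-x_{k+1})$. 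Both bounds are then seen to be extremized exactly when $x_k-x_{k+1}$ is as large as possible, which forces all $s$ deleted Chebyshev nodes to sit consecutively between $x_k$ and $x_{k+1}$.

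What this buys over your plan is that no comparison between two configurations is ever needed: one never has to track how the Lebesgue function at one gap reacts to relocating a deleted node elsewhere. The $N_k/D_k$ device converts the problem into maximizing an explicit expression in the single parameter $x_k-x_{k+1}$, so the ``widest gap wins'' conclusion drops out without any iterative exchange. Your approach could in principle be made to work, but the quantitative step you correctly isolate—showing the net effect on the ratio when one term moves—would require essentially the same term-by-term estimates the paper performs, and then an extra layer of argument on top. The paper's $N_k,D_k$ factorization (borrowed from the Lebesgue-constant literature, cf.\ \cite{bos2011lebesgue}) is the missing idea that makes the computation tractable in one stroke.
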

\begin{proof}
	Lemma \ref{lemma6} expresses that the worst case in the interpolation step of the proposed scheme is occurred when $s$ straggling worker nodes correspond to the consecutive elements of $\tilde{\mathcal{X}}_{N}=\{\tilde{x}_k\}_{k=0}^{N}$. The proof of Lemma \ref{lemma6} can be found in  supplementary materials.
\end{proof}
\begin{theorem}\label{myWellSpaced}
		Let $\mathcal{X}_n=\{x_j\}_{j=0}^{n}$ be a subset of  $\tilde{\mathcal{X}}_{N}=\{\tilde{x}_k\}_{k=0}^{N}$ with $n+1$ elements such that $x_0>x_1>\dots>x_{n}$, where $n=N-s$, $s$ is a constant number independent of $N$, and $\tilde{x}_k,k\in[N]$, are the Chebyshev points of the second kind. Then, $\mathcal{X}=(\mathcal{X}_n)_{n\in\mathbb{N}}$ is a family of well-spaced points with $C=\frac{\pi^2(s+1)}{2}$ and $R=\frac{(s+1)(s+3)\pi^2}{4}$ for $s<N-2$. In addition, the Lebesgue constant for Berrut's rational interpolant in $\mathcal{X}_n$ is upper bounded as 
\begin{align*}
	\Lambda_n\le\big(\frac{(s+1)(s+3)\pi^2}{4}+1\big)\hspace{-1mm}\big(1+\pi^2(s+1)\ln (N-s)\big).
	\end{align*}
\end{theorem}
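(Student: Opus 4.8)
The plan is to split the statement into its two halves and observe that the Lebesgue bound is an immediate corollary of the well-spacedness. Indeed, once $\mathcal{X}=(\mathcal{X}_n)_{n\in\mathbb{N}}$ is shown to be well-spaced with the stated $C=\frac{\pi^2(s+1)}{2}$ and $R=\frac{(s+1)(s+3)\pi^2}{4}$, Theorem \ref{lebesgue} gives $\Lambda_n\le(R+1)(1+2C\ln n)$, and substituting $n=N-s$ together with $R+1=\frac{(s+1)(s+3)\pi^2}{4}+1$ and $2C=\pi^2(s+1)$ reproduces the claimed bound verbatim. Hence the entire task reduces to verifying the three inequalities of Definition \ref{wellSpaced} for an arbitrary subset obtained by deleting $s$ of the nodes $\tilde x_k=\cos\frac{k\pi}{N}$, with constants independent of $N$.

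First I would set up a uniform description of the gaps. Labelling the nodes of $\mathcal{X}_n$ by their original indices in $\tilde{\mathcal{X}}_N$, and noting that only $s$ nodes are removed, any gap $x_p-x_q$ of $\mathcal{X}_n$ between nodes whose base indices are $p<q$ absorbs at most $s$ deletions and therefore merges at most $s+1$ consecutive base gaps, i.e. $q-p\le s+1$, no matter which nodes are deleted. The key algebraic tool is the identity $\cos\frac{p\pi}{N}-\cos\frac{q\pi}{N}=2\sin\frac{(p+q)\pi}{2N}\sin\frac{(q-p)\pi}{2N}$, which factors every gap, and every window length $x_j-x_{k+1}=\sum_{i=j}^{k}g_i$, into a \emph{center} sine (depending on the position $p+q$) and a \emph{width} sine (depending on the merge count $q-p$). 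Applying Jordan's inequality $\frac{2}{\pi}\theta\le\sin\theta\le\theta$ on $[0,\pi/2]$, together with the reflection $\sin\theta=\sin(\pi-\theta)$ and the symmetry $\tilde x_k=-\tilde x_{N-k}$ to fold arguments exceeding $\pi/2$, linearizes all these sines in the indices.

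With this in hand, condition $(3)$ is checked on two consecutive merged gaps $g_{k-1},g_k$ that share a base index: bounding the ratio of their width sines by $\frac{(s+1)\pi}{2}$ (widths lie between $1$ and $s+1$) and the ratio of their center sines by $\frac{(s+3)\pi}{2}$ produces exactly $\frac{g_k}{g_{k-1}}\le\frac{(s+1)(s+3)\pi^2}{4}=R$, and the reverse inequality $\frac{1}{R}\le\frac{g_k}{g_{k-1}}$ follows by exchanging the two gaps. For conditions $(1)$ and $(2)$ I would write the relevant ratio as $\frac{(k+1-j)\,g_k}{\sum_{i=j}^{k}g_i}=\frac{(k+1-j)(x_k-x_{k+1})}{x_j-x_{k+1}}$, factor numerator and denominator as sine products, and apply the linearized bounds with base-index span $b-a\ge k+1-j$ (where $a,b$ are the base indices of $x_j$ and $x_{k+1}$) and width at most $s+1$; the factor $k+1-j$ cancels, the two Jordan applications contribute $\pi^2$, and the ratio collapses to $\frac{\pi^2(s+1)}{2}=C$. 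Condition $(2)$ is identical after reflecting the index order through the symmetry of the node set.

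The main obstacle is that the clean linearization above is valid verbatim only when every sine argument lies in $[0,\pi/2]$, that is, when the relevant nodes sit on one side of the midpoint of $[-1,1]$. The genuine work is the bookkeeping for windows straddling the center, where $\frac{(p+q)\pi}{2N}$ exceeds $\pi/2$ and must be folded through $\sin\theta=\sin(\pi-\theta)$; this forces casework on the positions of the endpoints relative to $N$ and a verification that the constants $C,R$ survive the folding. Here the constancy of $s$ is exactly what keeps every merge factor, and hence every constant, bounded uniformly in $N$, while Lemma \ref{lemma6}, which pins the extremal straggler pattern to a single contiguous block of deletions, confirms that the resulting estimate is genuinely the worst case rather than an artifact of a particular subset.
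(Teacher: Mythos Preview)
Your proposal is correct and follows essentially the same route as the paper: factor every gap via the identity $\cos\frac{p\pi}{N}-\cos\frac{q\pi}{N}=2\sin\frac{(p+q)\pi}{2N}\sin\frac{(q-p)\pi}{2N}$, linearize the width and center sines with Jordan's inequality, handle arguments exceeding $\pi/2$ by folding, and then read off the Lebesgue bound from Theorem~\ref{lebesgue} exactly as you say. The one organizational difference is that the paper invokes Lemma~\ref{lemma6} up front to reduce the $R$-analysis to the single extremal configuration (all $s$ deletions contiguous) and then runs three positional sub-cases around the deletion block, whereas you bound arbitrary consecutive merged gaps directly and cite Lemma~\ref{lemma6} only as a post-hoc sanity check; either ordering lands on the same $C=\frac{\pi^2(s+1)}{2}$ and $R=\frac{(s+1)(s+3)\pi^2}{4}$.
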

\begin{proof}
	This proof is based on Definition \ref{wellSpaced} and Theorem \ref{lebesgue}, and shows that the Lebesgue constant for the proposed scheme is bounded above by $c\ln{(N-s)}$ for some constant $c>0$. The formal proof can be found in  supplementary materials.
%
\end{proof}
As an example, Lebesgue function for Berrut's rational interpolant in the proposed scheme with different values for parameters $N$ and $s$ is demonstrated in  \figref{fig1}.
	\begin{figure*}[!t]
	\centering
	\includegraphics[width=\columnwidth]{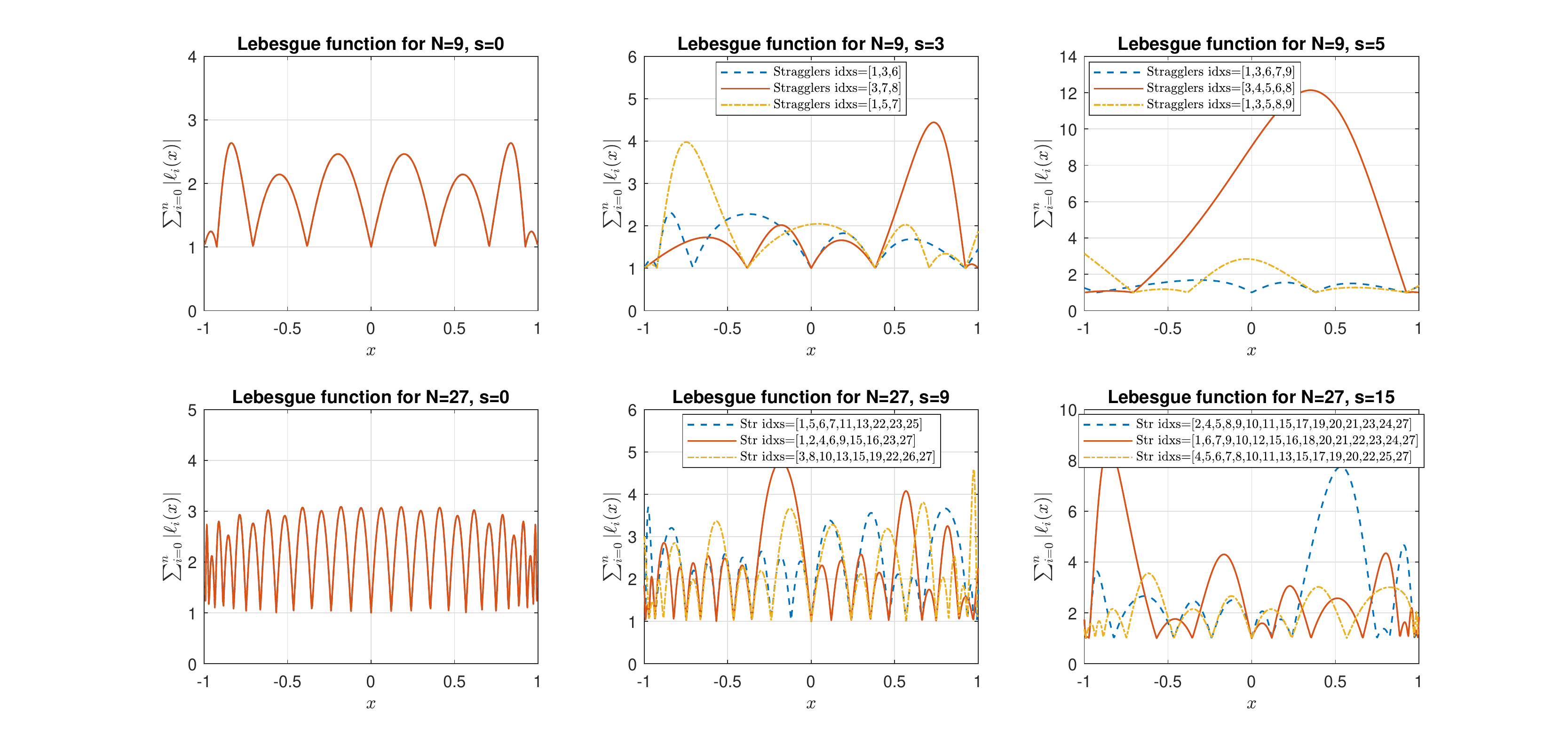}
	\caption{The value of Lebesgue function versus $x\in[-1,1]$ in Berrut's rational interpolant with different parameters value $N$ and $s$.}
	\label{fig1}
\end{figure*}
Note that the Lebesgue constant is not a function of the evaluation points or the function $f$. One way to bound the approximation error of the proposed method is to use the Lebesgue constant as follows.
\begin{corollary}
	Consider a distributed setting, consisting of $N+1$ worker nodes with up to $n+1=N+1-s$ non-straggling worker nodes with corresponding $z_j$, $j\in [n]$, interpolation points. Then, the error of the approximately evaluation of any arbitrary function $f$ using the proposed scheme is upper bounded as
	\begin{align}\label{eq32}
	\norm{r_{\text{Berrut},\mathcal{F}}(z)-g(z)}\le \big(1+\Lambda_n\big)\min\limits_{r(z)\in \mathcal{Q}}\norm{g(z)-r(z)},
	\end{align}
	where  $g(z)\triangleq f(u(z))$ and $u(z)$ is defined in \eqref{coding}.
	$\mathcal{Q}$ is defined as the set of all rational functions in the form of
	  $r(z)=p(z)/q(z)$, where $r(z_j)=g(z_j)$, $j\in[n]$, $p(z)$ is a polynomial function of degree $n$, and $q(z)=L(z)\sum_{j=0}^{n}{\frac{{(-1)^j}}{z-z_j}}$. Also, $\norm{.}$ denotes the maximum norm, i.e., $\norm{f}=\max_{x}|f(x)|$.
\end{corollary}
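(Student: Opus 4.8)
The plan is to recognize this corollary as the direct specialization of the Lebesgue Theorem (Theorem~\ref{LebesgueTh}) to the composite function $g(z)\triangleq f(u(z))$, with Berrut's rational interpolant playing the role of the linear projection $\mathcal{L}_{(\mathcal{X}_n,\mathcal{B}_n)}$. First I would observe that the decoding rule in \eqref{inter} produces exactly the Berrut rational interpolant of $g$ at the surviving nodes. Indeed, worker $i$ returns $\hat{\mathbf{Y}}_i=f(\hat{\mathbf{X}}_i)=f(u(\tilde z_i))=g(\tilde z_i)$, so $r_{\text{Berrut},\mathcal{F}}(z)$ is obtained by inserting the samples $\{g(z_j)\}_{j=0}^{n}$ at the interpolation points $\{z_j\}_{j=0}^n=\mathcal{S}$ into Berrut's formula \eqref{berrut}. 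Hence $r_{\text{Berrut},\mathcal{F}}=\mathcal{L}_{(\mathcal{X}_n,\mathcal{B}_n)}g$, where $\mathcal{B}_n=\{\ell_{i,\text{Berrut}}\}_{i=0}^n$ is the Berrut basis.

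Next I would check the two hypotheses needed to invoke Theorem~\ref{LebesgueTh}. For continuity, note that $u(z)$ in \eqref{coding} is itself a Berrut interpolant, so by Lemma~\ref{denum} its denominator has no real root and $u$ is continuous on $[-1,1]$; composing with a continuous $f$ keeps $g\in\mathcal{C}[a,b]$. For the structural hypothesis, I would clear denominators in \eqref{inter} by multiplying numerator and denominator by $L(z)=\prod_{k=0}^{n}(z-z_k)$, which gives $r_{\text{Berrut},\mathcal{F}}(z)=p(z)/q(z)$ with $p(z)=\sum_{i=0}^{n}(-1)^i g(z_i)\prod_{k\ne i}(z-z_k)$ a polynomial of degree at most $n$ and $q(z)=L(z)\sum_{j=0}^{n}\frac{(-1)^j}{z-z_j}$. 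This denominator is precisely the $q_n$ fixed in the statement, so the competitor class $\mathcal{Q}$ of the corollary coincides with the set of rationals in $\mathcal{R}_{n,n}$ that interpolate $g$ at the nodes and share the denominator $q_n$, exactly as required by Theorem~\ref{LebesgueTh}.

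With these identifications the conclusion is immediate: applying Theorem~\ref{LebesgueTh} with $f$ replaced by $g$ and with the interpolant $\mathcal{L}_{(\mathcal{X}_n,\mathcal{B}_n)}g=r_{\text{Berrut},\mathcal{F}}$ yields $\norm{g-r_{\text{Berrut},\mathcal{F}}}\le(1+\Lambda_n)\min_{r\in\mathcal{Q}}\norm{g-r}$, which is exactly \eqref{eq32}.

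I do not expect a deep obstacle, since the result is essentially a translation of an already-established theorem into the notation of the scheme. The points that need genuine care are the bookkeeping ones: confirming that clearing $L(z)$ reproduces the prescribed denominator $q_n$, and that $q(z_j)\ne0$ (via Lemma~\ref{denum}) so that $r_{\text{Berrut},\mathcal{F}}$ actually interpolates $g$ and therefore itself lies in $\mathcal{Q}$, making the right-hand minimum well posed. A secondary subtlety is that $f\colon\mathbb{V}\to\mathbb{U}$ is matrix-valued, so $g$ takes values in a space of real matrices; the scalar Lebesgue bound would be transferred by applying it entrywise, or under the fixed matrix norm $\norm{\cdot}$, which is routine.
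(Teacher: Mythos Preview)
Your proposal is correct and follows essentially the same approach as the paper: the corollary is just the Lebesgue Theorem (Theorem~\ref{LebesgueTh}) specialized to the function $g=f\circ u$ with the Berrut interpolant as the projection. The paper's own proof is a one-line pointer to Theorem~\ref{LebesgueTh} (together with Theorems~\ref{lebesgue} and~\ref{myWellSpaced}, which are only there to ensure $\Lambda_n$ is finite and controlled for the proposed nodes); your write-up actually fills in more of the bookkeeping than the paper does.
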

\begin{proof}
	Inequality \eqref{eq32} is derived using Theorem \ref{LebesgueTh}, Theorem \ref{lebesgue} and Theorem \ref{myWellSpaced}.
\end{proof}
Note that bound \eqref{eq32} is not tight, and further analysis is required to derive better upper bounds.\\
In the proposed approach, the outcomes of any arbitrary subsets of available worker nodes are sufficient to calculate the approximated result of $g(z)$ with bounded approximation error introduced in \eqref{eq32}. According to \eqref{eq32}, the more outcomes are received from worker nodes, the more accurate the approximation will be. 
\begin{theorem}\label{thBound}
	Let $r_{\text{Berrut},\mathcal{F}}(z)$  be defined by  \eqref{inter} and $g(z)=f(u(z))$ have a continuous
	second derivative on $[-1,1]$.  In a system with $N+1$ worker nodes and $s$ stragglers, where $s<N-2$, the approximation error of this interpolation using BACC is upper bounded as 
	\begin{align*}
	\norm{r_{\text{Berrut},\mathcal{F}}(z)-g(z)}\le2(1+R)\sin\big({\frac{(s+1)\pi}{2N}}\big)\norm{g^{\prime\prime}(z)},
	\end{align*} 
	if $N-s$ is odd, and  
	\begin{align*}
	\norm{r_{\text{Berrut},\mathcal{F}}(z)-g(z)}\hspace{-1mm}\le\hspace{-1mm}2(1\hspace{-1mm}+\hspace{-1mm}R)\sin(\hspace{-1mm}{\frac{(s\hspace{-1mm}+\hspace{-1mm}1)\pi}{2N}}\hspace{-1mm})\hspace{-1mm}\bigg(\hspace{-1mm}\norm{g^{\prime\prime}(z)}\hspace{-1mm}+\hspace{-1mm}\norm{g^{\prime}(z)}\hspace{-1mm}\bigg),
	\end{align*}
	if $N-s$ is even, where $R=\frac{(s+1)(s+3)\pi^2}{4}$. 
\end{theorem}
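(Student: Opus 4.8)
The plan is to reduce Theorem~\ref{thBound} to the general Berrut error estimate of Theorem~\ref{error1}, specialized to the interpolation data produced by the BACC decoding step. First I would observe that $g(z)=f(u(z))$ is by hypothesis in $\mathcal{C}^2[-1,1]$, and that $r_{\text{Berrut},\mathcal{F}}$ in \eqref{inter} is precisely Berrut's rational interpolant of $g$ at the $n+1=N+1-s$ surviving interpolation points $\tilde z_i\in\mathcal{S}\subseteq\tilde{\mathcal{X}}_N$. Hence Theorem~\ref{error1} applies verbatim with $[a,b]=[-1,1]$, so that $b-a=2$, and with $f$ replaced by $g$; the parity condition there is on $n=N-s$, which matches the ``$N-s$ odd/even'' split in the statement. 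It then remains only to bound the two geometric quantities appearing in Theorem~\ref{error1}: the maximal node spacing $h$ and the local mesh ratio $\lambda$.

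For the mesh ratio, I would invoke Theorem~\ref{myWellSpaced}, which shows that $\mathcal{X}_n$ belongs to a family of well-spaced points with $R=\frac{(s+1)(s+3)\pi^2}{4}$. Condition~(3) of Definition~\ref{wellSpaced} bounds each consecutive spacing ratio by $R$, and since $\lambda$ is a maximum over $i$ of a minimum of two such ratios, we obtain at once $\lambda\le R$, hence $1+\lambda\le 1+R$. The hypothesis $s<N-2$ ensures $n\ge 3$, so that $\lambda$ is well defined.

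The main work is bounding $h=\max_i(x_{i+1}-x_i)$. Since only $s$ points are deleted from $\tilde{\mathcal{X}}_N$, any gap between consecutive surviving nodes skips at most $s$ of the original Chebyshev nodes, so each gap has the form $\cos\frac{k\pi}{N}-\cos\frac{(k+m+1)\pi}{N}$ with $m\le s$. Applying the identity $\cos A-\cos B=2\sin\frac{A+B}{2}\sin\frac{B-A}{2}$ rewrites this as $2\sin\!\big(\frac{(2k+m+1)\pi}{2N}\big)\sin\!\big(\frac{(m+1)\pi}{2N}\big)$. Bounding the first sine by $1$ and using that the second is increasing in $m$ for $m+1\le N$, the extreme case is $m=s$ with $2k+m+1\approx N$ (consecutive stragglers, consistent with Lemma~\ref{lemma6}), which gives $h\le 2\sin\!\big(\frac{(s+1)\pi}{2N}\big)$. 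Substituting $h$, $1+\lambda\le 1+R$, and $b-a=2$ into the two inequalities of Theorem~\ref{error1}, the numeric factors combine as $2\cdot 2\cdot\frac12=2$, yielding exactly the claimed bounds for $N-s$ odd and $N-s$ even.

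The only delicate point is the bound on $h$: one must justify that the largest surviving node gap, over all possible straggler patterns, is controlled by the consecutive-deletion configuration. This is where Lemma~\ref{lemma6} together with the monotonicity of $\sin\frac{(m+1)\pi}{2N}$ in $m$ carries the argument; the remaining steps are direct substitutions into the already-established Theorems~\ref{error1} and~\ref{myWellSpaced}.
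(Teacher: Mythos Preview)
Your proposal is correct and follows essentially the same route as the paper: apply Theorem~\ref{error1} with $[a,b]=[-1,1]$, bound $h\le 2\sin\frac{(s+1)\pi}{2N}$ via the product-to-sum identity, and bound $\lambda\le R$ using the well-spaced constants from Theorem~\ref{myWellSpaced}. One minor remark: your bound on $h$ already follows directly from $\sin\frac{(2k+m+1)\pi}{2N}\le 1$ and the monotonicity of $\sin\frac{(m+1)\pi}{2N}$, so invoking Lemma~\ref{lemma6} there is unnecessary (and the paper does not use it for this step either).
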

\begin{proof}
	The proof can be found in  supplementary materials.
\end{proof}
\begin{remark}
	Theorem \ref{thBound} shows that in the proposed scheme, for a fixed total number of worker nodes $N$, the fewer stragglers exist, the more accurate the final result will be. 
\end{remark}

\section{Simulation Results}\label{VI}
As mentioned before, the proposed scheme can be used to approximately evaluate arbitrary real functions at the desired data points. In this section, we demonstrate the performance of the proposed scheme through some simulation results.\\
\textbf{Case 1:} 
$f:\mathbb{R}^{m\times m}\to\mathbb{R}^{m\times m}$ is a polynomial function of degree $\deg f$ and the goal is to evaluate $f$ over a data set $\mathbf{X}=(\mathbf{X}_0,\dots,\mathbf{X}_{K-1})$, where $\mathbf{X}_i\in\mathbb{R}^{m\times m }$ for $i\in [K]$. Note that in this case $N$ is not necessarily greater than $(K-1)\deg f +1$. Indeed, $N$ can be very smaller than $(K-1)\deg f +1$.

Recall that if we use Lagrange coded computing, in this case, we need at least $(K-1)\deg f +1$ worker nodes; otherwise, the scheme does not work. Even if the number of worker nodes is greater than $(K-1)\deg f +1$, LCC is not numerically stable. The reason is that Lagrange coded computing relays on Vandermonde matrices for decoding, and a real-valued $n\times n$ Vandermonde matrix is ill-conditioned, specially when $n$ becomes large. On the other hand, in Lagrange coded computing, if the evaluation of a high degree polynomial function over a small data set is considered, the number of required servers becomes prohibitively large.
In many applications, having an approximated results of evaluation $f$ over the data set is enough as long as it is numerically stable, and the computational complexity is low. Thus,  proposed BACC scheme can be used to approximately evaluate the function over the desired data set without those challenges.
In this simulation, for each value of  $(N,k,\deg{f})$, we generate 100 different polynomial functions $f:\mathbb{R}\to\mathbb{R}$, where the coefficients of each polynomial function are chosen uniformly at random on the interval $[-10,10]$. The input data points $X_1\dots X_{K-1}\in\mathbb{R}$ are chosen uniformly at random on the interval $[-1,1]$ for each function. We consider scenarios where $s$ of $N$ worker nodes are stragglers. Since there are many different subsets of size $s$ from $N$, we generate 1000 cases chosen uniformly at random for each polynomial function. In particular, in Fig. \ref{fig3} we consider four cases: 
$(N,K,\deg{f})=(500,20,25),(700,20,35),(700,30,35),(500,30,25)$. For each number of stragglers, the shadow area in Fig. \ref{fig3} represents
the relative error of all functions averaged over all choices of stragglers. Also, the solid line represents the overall average of these errors.
Figure \ref{fig3} shows that the number of required non-straggling worker nodes is not necessarily equal or greater than $(K-1)\deg{f}+1$ to approximately evaluate function $f$ over the input data set. As mentioned before, Lagrange coded computing does not work with less than $(K-1)\deg{f}+1$ worker nodes. Also, the more results are available from the worker nodes, the more accurate is the final results.

%
%
\begin{figure}
	\begin{center}
		\centerline{\includegraphics[width=0.5\columnwidth]{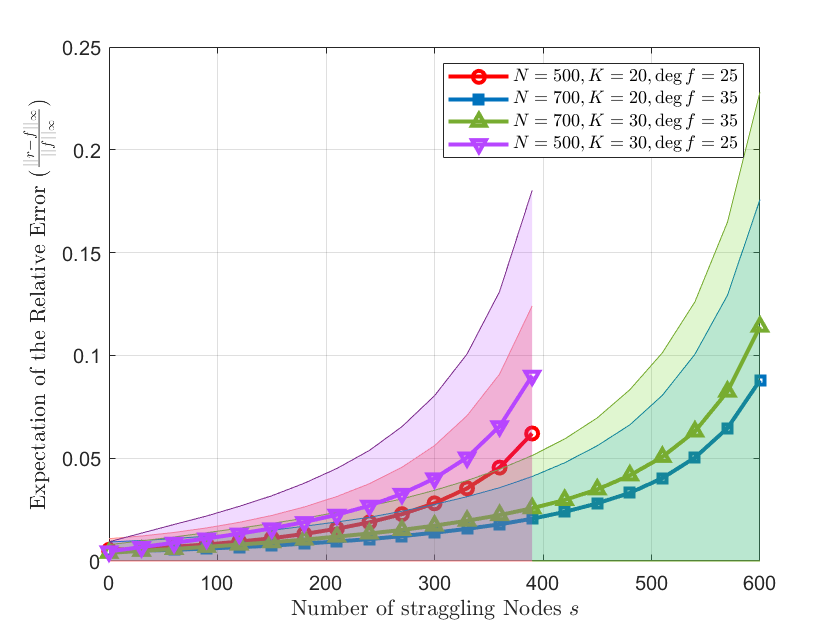}}
		\caption{Expectation of the relative error of the 100 polynomial functions $f$ of degree $\deg{f}$ using the proposed scheme. Note that in many cases, the number of non-straggling worker nodes is less than $(K-1)\deg{f}+1$, and still the error is reasonable.}
		\label{fig3}
	\end{center}
\end{figure}
Another choice for the interpolation points is the equidistant points. Figure \ref{fig7} compares the impact of using BACC with Chebyshev points and using BACC with the equidistant points in the expectation of relative error results over a set of 100 different polynomial functions. In this simulation, we consider $(N,K,\deg{f})=(500,20,25)$. Figure \ref{fig7} shows that, as compared to equidistant points, using BACC with Chebyshev points can reduce the expectation of the relative error by an order of magnitude, where $f$ is a polynomial function.
\begin{figure}
	\begin{center}
		\centerline{\includegraphics[width=0.5\columnwidth]{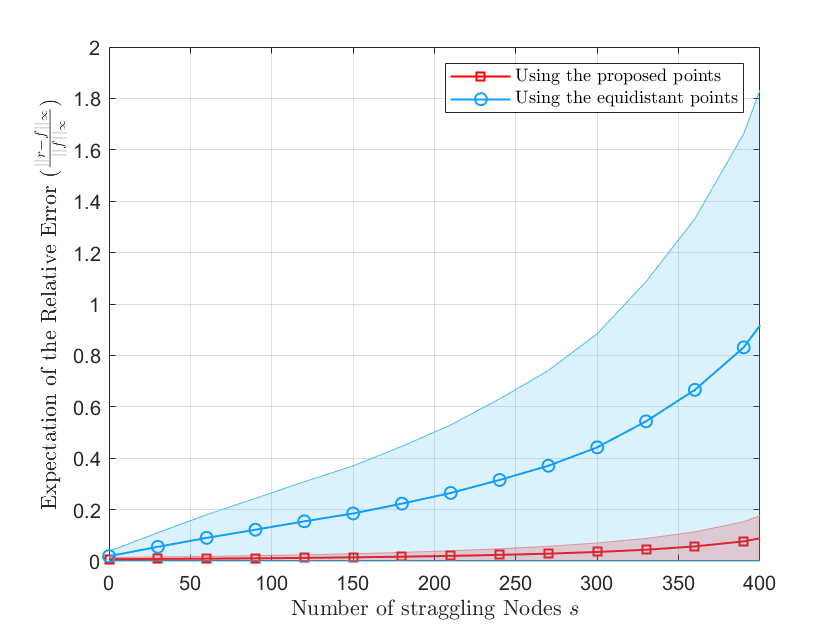}}
		\caption{Comparison between the impact of using the proposed scheme and using the equidistant points in the expectation of the relative error of a set of 100 polynomial functions $f$ of degree $\deg{f}=25$, where $N=500$ and $K=20$.}
		\label{fig7}
	\end{center}
\end{figure}
\\\textbf{Case 2:} $f:\mathbb{R}^{m\times m}\to\mathbb{R}^{m\times m}$ is not a polynomial function. Note that the existing coded computing schemes are limited to polynomial functions, and  they do not work in this case. Here, we use BACC to approximately evaluate non-polynomial function $f$ over the input data set $\mathbf{X}_0,\dots\mathbf{X}_{K-1}$. We consider $f=x\sin{x}$ and the input points are $X_i=-12+\frac{24i}{19}$, for $i=0,\dots,19$. The performance of the proposed scheme is shown in Fig. \ref{fig11}, where $N=60$ and $s=20$.

Figure \ref{fig12} shows the expectation of the relative error of the approximation of the function $f=x\sin{x}$ using the proposed scheme versus the number of stragglers. In this figure, two different values for the total number of worker nodes $N=250,300$ and the number of input data points $K=20,30$ are considered. Note that the stragglers are chosen uniformly at random over $N$ worker nodes in 1000 iterations. Also, the input data points are chosen uniformly at random in the interval $[-1,1]$. As shown in Fig. \ref{fig12}, the proposed scheme exhibits a very good performance for this function. 
\begin{figure}
	\begin{center}
		\centerline{\includegraphics[width=0.5\columnwidth]{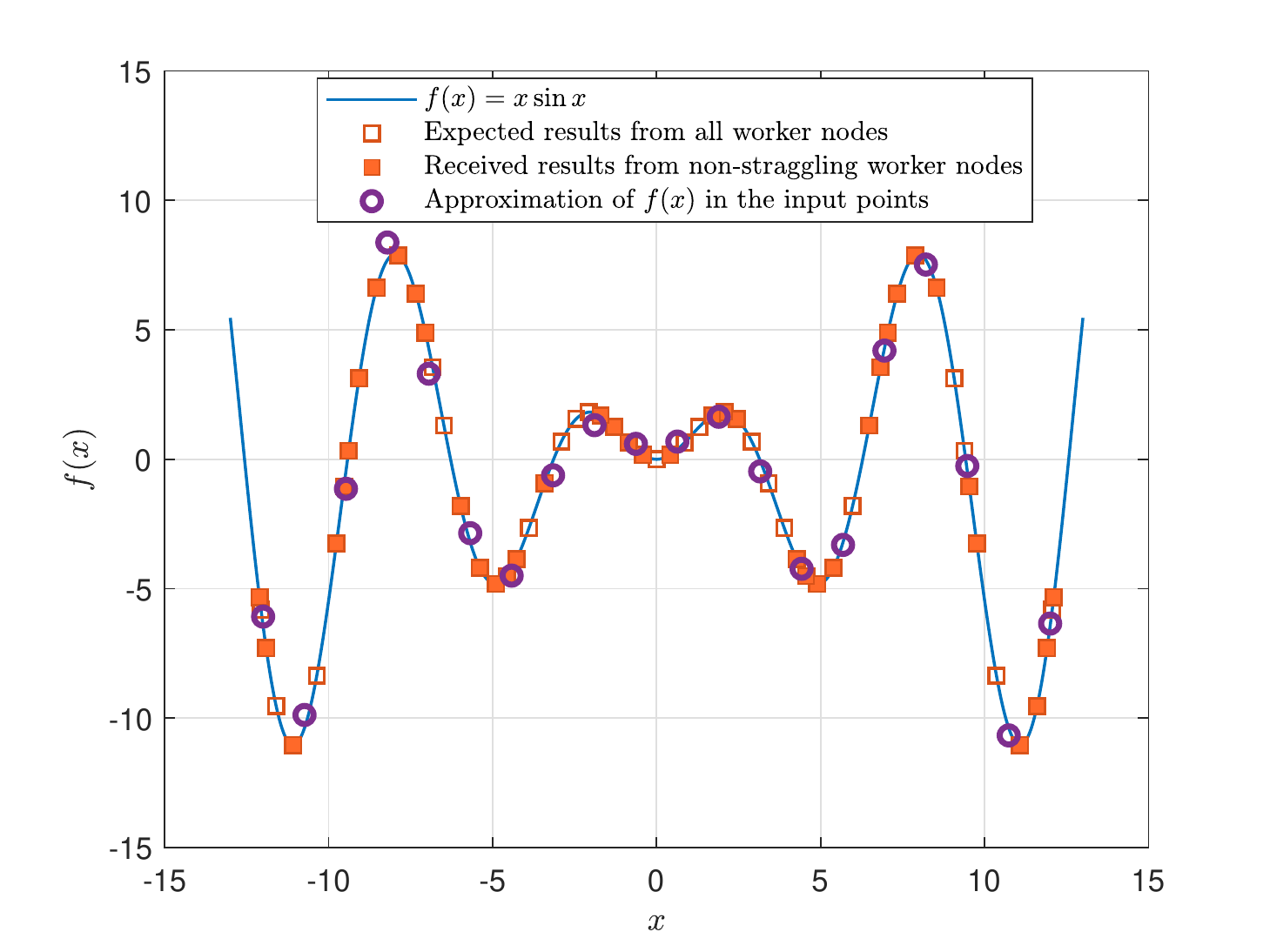}}
		\caption{Approximation of the function $f=x\sin{x}$ in the input data set using BACC, where $K=20$, $N=60$ and $s=20$.
		}
		\label{fig11}
	\end{center}
\end{figure}
\begin{figure}[htb]
	\centering
		\includegraphics[width=0.5\columnwidth]{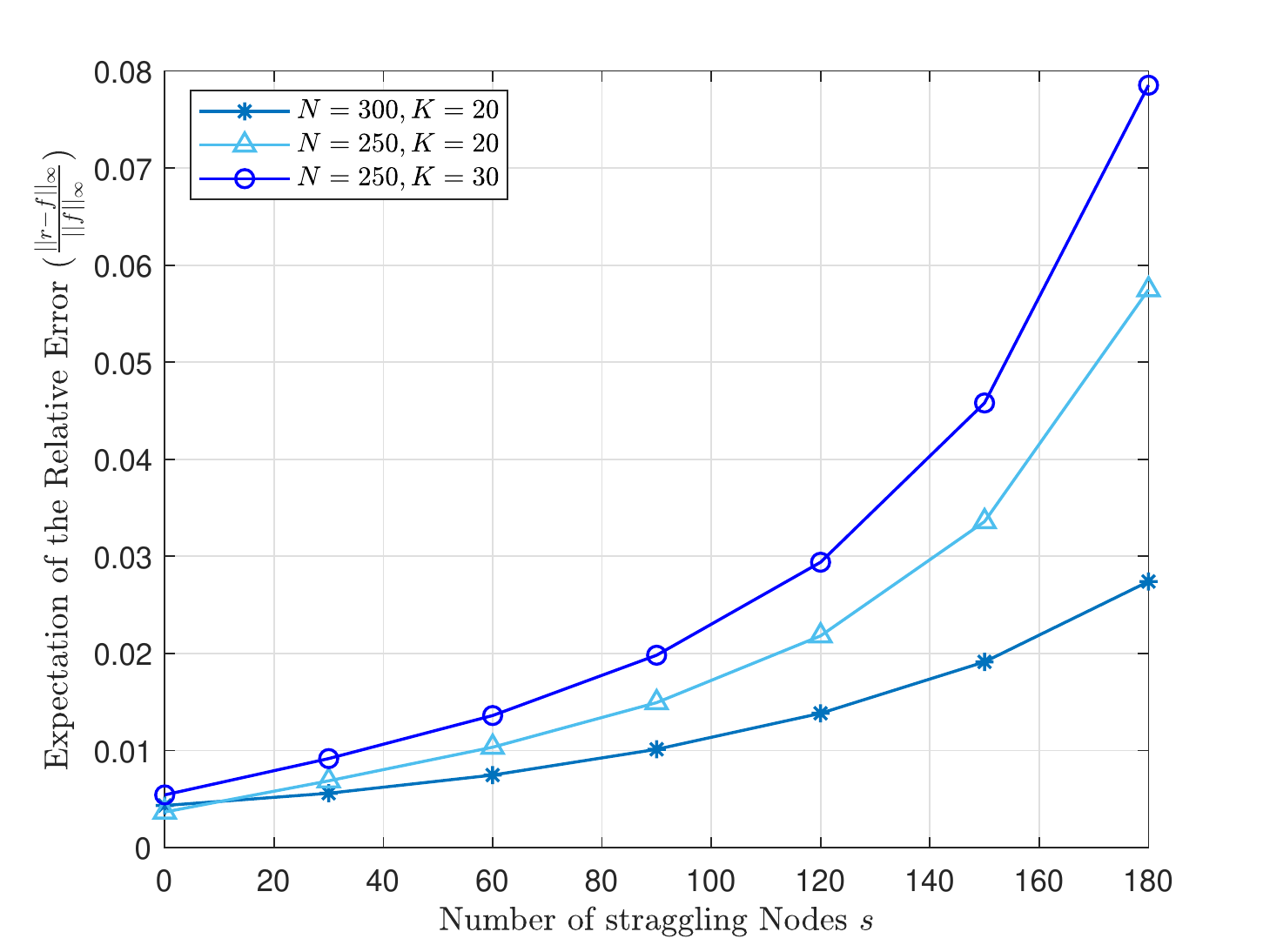}
	\caption{Expectation of the relative error of the approximation of the function $f=x\sin{x}$ for $x\in[-1,1]$ using the proposed scheme, for different values of $N$ and $K$.}
	\label{fig12}
\end{figure}

\section{Application of the Proposed Scheme in Deep Learning}\label{LearningSec}
Deep neural networks face the challenge of training complicated models with large data sets. Distributed machine learning can be used as an inevitable solution to overcome this challenge. In this scenario, the data set is divided among worker nodes, and the stochastic gradient descent algorithm is used to train the model. In each iteration of the training process, partial gradients are computed in each worker node based on the local data samples and are returned to the master node, where the model parameters are updated using these gradients. The updated parameters are then reported to the master node. In this section, BACC as a coding scheme, is used to overcome some challenges of distributed learning such as stragglers effect, and also some major challenges of coded computation approaches in distributed learning such as numerical instability, limiting to a specific class of function like polynomial functions, and the increase of the number of needed servers in proportion to the degree of the polynomial and the size of data set. 

In our scheme, worker nodes compute the partial gradient on their assigned coded data set. Having aggregated the results of fastest worker nodes, the master node is capable of approximately evaluating the full gradient even if there are $s$ stragglers in the distributed system. In brief, BACC is a numerically stable scheme in which the full gradient vectors are approximately computed with low computational complexity. 
Also, in BACC, the number of required worker nodes is decreased by sacrificing a small amount of accuracy such that the approximated result has a bounded error.

Assume a deep neural network (DNN) with $L$ layers consisting of $L$ parameter matrices $\mathbf{W}_\ell$ (weights) for $\ell=1,\dots L$, which have to be updated in each iteration during the training process. Each iteration of training process has three steps called feedforward, back-propagation, and updating step. Now we explain briefly about these steps which are needed to explain our scheme. 
Suppose the $\ell$th layer of DNN has $M_\ell$ neurons, and let $\mathbf{w}_{\ell}(m)\in\mathbb{R}^{M_{\ell-1}}$ be the weight vector of $m$th neuron of the $\ell$th layer where $m=1,\dots M_\ell$ and $\ell=1,\dots L$. Thus, the output of the  $m$th neuron of the $\ell$th layer at the $t$th iteration is computed as
\begin{align}
{s}_{\ell}^{(t)}(m)=f_{a}\bigg((\mathbf{w}_{\ell}^{(t)}(m))^T\mathbf{s}_{\ell-1}^{(t)}\bigg)\in\mathbb{R},
\end{align}
where $f_a(.)$ is an activation function which can be Sigmoid, ReLU, $\tanh$ or other non-linear common activation functions. Note that for the first layer we have $\mathbf{s}_0^{(t)}=\mathbf{x}^{(t)}$, where $\mathbf{x}^{(t)}\in\mathbb{R}^d$ is the training data sample with $d$ features used for the $t$th iteration of training.
Now consider a supervised machine learning problem. Given a training data set $\mathcal{D}=\{(\mathbf{x}_i,y_i)\}_{i=1}^n$, where $\mathbf{x}_i\in\mathbb{R}^d$ is the input sample with $d$ features, and $y_i\in\mathbb{R}$ is the corresponding label. We represent the input samples as a matrix $\mathbf{X}\in\mathbb{R}^{n\times d}$, where $\mathbf{x}^T_i$ is the $i$th row of $\mathbf{X}$. In many supervised machine learning problems, the goal is to learn the parameters $\mathbf{W}_\ell\in\mathbb{R}^{M_\ell\times M_{\ell-1}}$ for $\ell=1\dots L$ by minimizing the following empirical  loss function
\begin{align*}
J(\mathcal{D};\mathbf{W}_1,\dots \mathbf{W}_L)=\frac{1}{|\mathcal{D}|}\sum_{(\mathbf{x},y)\in \mathcal{D}}{J(\mathbf{x},y;\mathbf{W}_1,\dots \mathbf{W}_L)}.
\end{align*}
Common loss functions $J(\mathbf{x},y;\mathbf{W}_1,\dots \mathbf{W}_L)$ in machine learning problems are mean squared error loss, hinge loss, logistic loss and cross-entropy loss,  which are chosen according to some factors such as the type of machine learning algorithm, the type of data set and complexity of this minimization problem.
One approach to solve this optimization problem is gradient descent algorithm, which starts with some initial value for $\mathbf{W}_\ell$, and then in each iteration $t$ updates this parameters as 
\begin{align*}\label{update}
\mathbf{w}_\ell^{(t+1)}(m)=\mathbf{w}_\ell^{(t)}(m)-\eta\nabla_{\mathbf{w}_\ell^{(t)}(m)}{J(\mathcal{D};\mathbf{W}_1^{(t)},\dots \mathbf{W}_L^{(t)})},
\end{align*}
where $\eta\in\mathbb{R}$ is the learning rate, $\mathbf{w}_{\ell}^{(t)}(m)\in\mathbb{R}^{M_{\ell-1}}$ is the weight of $m$th neuron of the $\ell$th layer at the $t$th iteration, and $\nabla_{\mathbf{w}_\ell^{(t)}(m)}{J(\mathcal{D};\mathbf{W}_1^{(t)},\dots \mathbf{W}_L^{(t)})}$ is the gradient of the loss function at the current parameters. The gradient is computed as 
\begin{align*}
\nabla_{\mathbf{w}_\ell^{(t)}(m)}{J(\mathcal{D};\mathbf{W}_1^{(t)},\dots \mathbf{W}_L^{(t)})}=\frac{1}{\mathcal{|D|}}\sum_{i=1}^{n}\frac{\partial J(\mathbf{x}_i,y_i;\mathbf{W}_{1:L}^{(t)})}{\partial {\mathbf{w}_\ell^{(t)}(m)} },
\end{align*}
where the partial gradient operates on each scalar element of the vector ${\mathbf{w}_\ell^{(t)}(m)}$. According to partial gradients of the loss function, we have
\begin{align}
\mathbf{w}_\ell^{(t+1)}(m)=\mathbf{w}_\ell^{(t)}(m)-\frac{\eta}{|\mathcal{D}|}\sum_{i=1}^{n}{\delta_\ell^{(t,i)}(m)\mathbf{s}_{\ell-1}^{(t,i)}},
\end{align}
where $\delta_\ell^{(t,i)}(m)$ is the back-propagation error of neuron $m$ of $\ell$th layer corresponding to the data sample $i$ which can be computed as a function of back-propagation error vector of $\ell+1$th layer.
 \begin{figure*}
	
	\begin{center}
		\scalebox{0.6}{%

			\tikzset{every picture/.style={line width=0.75pt}} 
			
			\begin{tikzpicture}[x=0.75pt,y=0.75pt,yscale=-1,xscale=1]
			
			\draw (545.7,106.7) node  {\includegraphics[width=51.45pt,height=42.45pt]{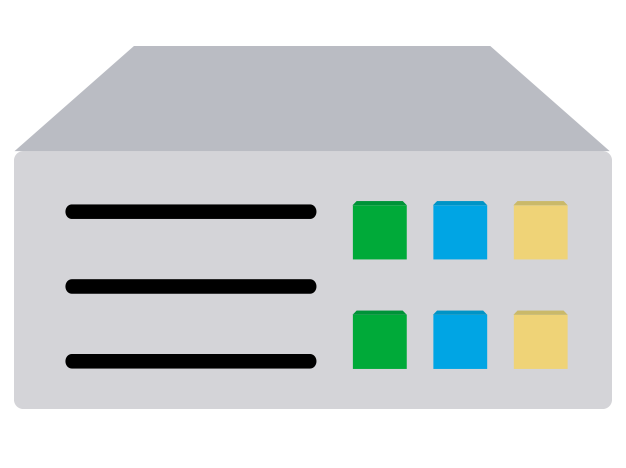}};
			\draw (434.95,222.8) node  {\includegraphics[width=48.53pt,height=37.5pt]{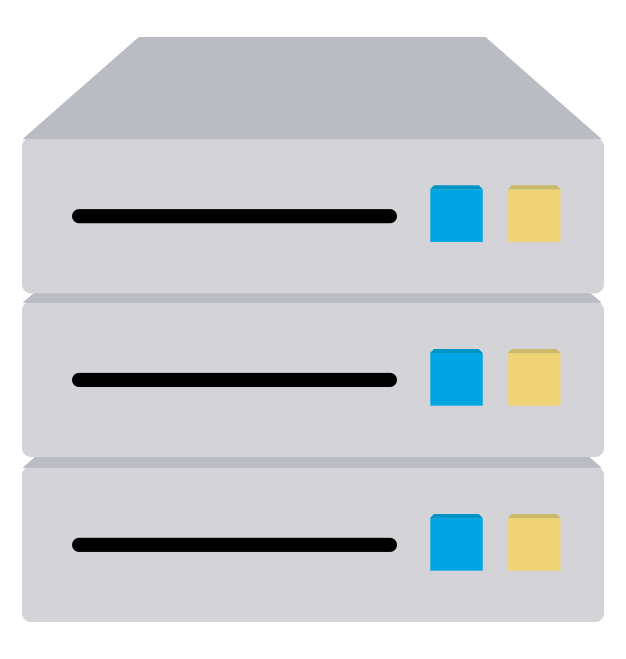}};
			\draw (547.35,223.2) node  {\includegraphics[width=48.53pt,height=37.5pt]{server3.png}};
			\draw (659.35,223.2) node  {\includegraphics[width=48.53pt,height=37.5pt]{server3.png}};
			\draw    (432.88,111.48) -- (432.88,199.98) ;
			\draw    (432.88,111.48) -- (511,110.97) ;
			\draw    (544.38,128.48) -- (545.88,200.48) ;
			\draw    (660.38,111.48) -- (660.38,199.98) ;
			\draw    (580.38,112.48) -- (660.38,111.48) ;
			Flowchart: Summing Junction [id:dp032187534651328065] 
			\draw  [color={rgb, 255:red, 208; green, 2; blue, 27 }  ,draw opacity=1 ][line width=4.5]  (641,225.19) .. controls (641,214.86) and (649.25,206.48) .. (659.44,206.48) .. controls (669.62,206.48) and (677.88,214.86) .. (677.88,225.19) .. controls (677.88,235.52) and (669.62,243.9) .. (659.44,243.9) .. controls (649.25,243.9) and (641,235.52) .. (641,225.19) -- cycle ; \draw  [color={rgb, 255:red, 208; green, 2; blue, 27 }  ,draw opacity=1 ][line width=4.5]  (646.4,211.96) -- (672.47,238.42) ; \draw  [color={rgb, 255:red, 208; green, 2; blue, 27 }  ,draw opacity=1 ][line width=4.5]  (672.47,211.96) -- (646.4,238.42) ;
			Image [id:dp7616296876686197] 
			\draw (553.89,262.1) node  {\includegraphics[width=14.69pt,height=19.5pt]{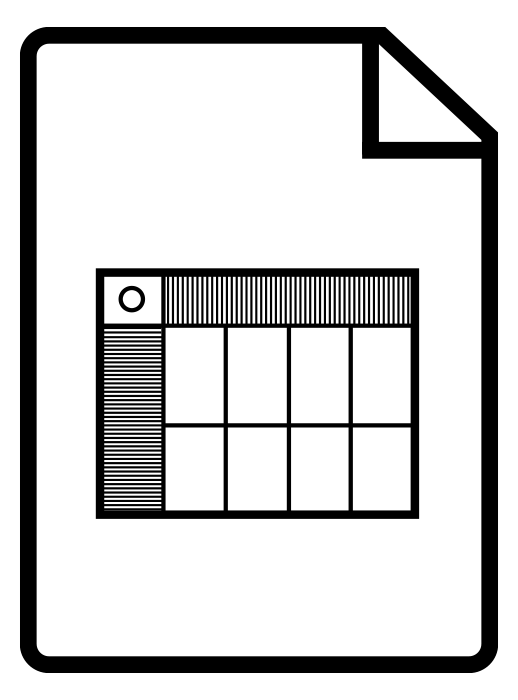}};
			\draw (663.09,263.3) node  {\includegraphics[width=14.69pt,height=19.5pt]{data1.png}};
			\draw (438.69,263.5) node  {\includegraphics[width=14.69pt,height=19.5pt]{data1.png}};
			\draw (926.42,105.1) node  {\includegraphics[width=51.45pt,height=42.45pt]{master.png}};
			\draw (815.67,221.2) node  {\includegraphics[width=48.53pt,height=37.5pt]{server3.png}};
			\draw (928.07,221.6) node  {\includegraphics[width=48.53pt,height=37.5pt]{server3.png}};
			\draw (1040.07,221.6) node  {\includegraphics[width=48.53pt,height=37.5pt]{server3.png}};
			\draw    (813.6,109.88) -- (813.6,198.38) ;
			\draw    (813.6,109.88) -- (891.72,109.37) ;
			\draw    (925.1,126.88) -- (926.6,198.88) ;
			\draw    (1041.1,109.88) -- (1041.1,198.38) ;
			\draw    (961.1,110.88) -- (1041.1,109.88) ;
			Flowchart: Summing Junction [id:dp1871779273865517] 
			\draw  [color={rgb, 255:red, 208; green, 2; blue, 27 }  ,draw opacity=1 ][line width=4.5]  (1021.72,223.59) .. controls (1021.72,213.26) and (1029.98,204.88) .. (1040.16,204.88) .. controls (1050.34,204.88) and (1058.6,213.26) .. (1058.6,223.59) .. controls (1058.6,233.92) and (1050.34,242.3) .. (1040.16,242.3) .. controls (1029.98,242.3) and (1021.72,233.92) .. (1021.72,223.59) -- cycle ; \draw  [color={rgb, 255:red, 208; green, 2; blue, 27 }  ,draw opacity=1 ][line width=4.5]  (1027.12,210.36) -- (1053.2,236.82) ; \draw  [color={rgb, 255:red, 208; green, 2; blue, 27 }  ,draw opacity=1 ][line width=4.5]  (1053.2,210.36) -- (1027.12,236.82) ;
			Image [id:dp1664369086004025] 
			\draw (934.61,260.5) node  {\includegraphics[width=14.69pt,height=19.5pt]{data1.png}};
			\draw (1043.81,261.7) node  {\includegraphics[width=14.69pt,height=19.5pt]{data1.png}};
			\draw (818.41,259.9) node  {\includegraphics[width=14.69pt,height=19.5pt]{data1.png}};
			\draw (819.21,287.9) node  {\includegraphics[width=14.69pt,height=19.5pt]{data1.png}};
			\draw (936.21,287.7) node  {\includegraphics[width=14.69pt,height=19.5pt]{data1.png}};
			\draw (1043.81,287.7) node  {\includegraphics[width=14.69pt,height=19.5pt]{data1.png}};
			\draw (180.7,106.18) node  {\includegraphics[width=51.45pt,height=42.45pt]{master.png}};
			\draw (70.35,224.68) node  {\includegraphics[width=48.53pt,height=37.5pt]{server3.png}};
			\draw (182.35,222.68) node  {\includegraphics[width=48.53pt,height=37.5pt]{server3.png}};
			\draw (294.35,222.68) node  {\includegraphics[width=48.53pt,height=37.5pt]{server3.png}};
			\draw    (67.88,110.96) -- (67.88,199.46) ;
			\draw    (67.88,110.96) -- (146,110.45) ;
			\draw    (179.38,127.96) -- (180.88,199.96) ;
			\draw    (295.38,110.96) -- (295.38,199.46) ;
			\draw    (215.38,111.96) -- (295.38,110.96) ;
			Flowchart: Summing Junction [id:dp2766789746196616] 
			\draw  [color={rgb, 255:red, 208; green, 2; blue, 27 }  ,draw opacity=1 ][line width=4.5]  (276,224.67) .. controls (276,214.34) and (284.25,205.96) .. (294.44,205.96) .. controls (304.62,205.96) and (312.88,214.34) .. (312.88,224.67) .. controls (312.88,235.01) and (304.62,243.38) .. (294.44,243.38) .. controls (284.25,243.38) and (276,235.01) .. (276,224.67) -- cycle ; \draw  [color={rgb, 255:red, 208; green, 2; blue, 27 }  ,draw opacity=1 ][line width=4.5]  (281.4,211.44) -- (307.47,237.9) ; \draw  [color={rgb, 255:red, 208; green, 2; blue, 27 }  ,draw opacity=1 ][line width=4.5]  (307.47,211.44) -- (281.4,237.9) ;
			Image [id:dp6934178136868703] 
			\draw (71.29,263.18) node  {\includegraphics[width=14.69pt,height=19.5pt]{data1.png}};
			\draw (189.69,263.58) node  {\includegraphics[width=14.69pt,height=19.5pt]{data1.png}};
			\draw (299.69,264.38) node  {\includegraphics[width=14.69pt,height=19.5pt]{data1.png}};
			
			\draw (414.4,260.52) node [anchor=north west][inner sep=0.75pt]  [font=\footnotesize]  {$x_{1}$};
			\draw (527.4,260.92) node [anchor=north west][inner sep=0.75pt]  [font=\footnotesize]  {$x_{2}$};
			\draw (636.2,260.72) node [anchor=north west][inner sep=0.75pt]  [font=\footnotesize]  {$x_{3}$};
			\draw (510.4,66.6) node [anchor=north west][inner sep=0.75pt]  [font=\footnotesize]  {$Master\ Node$};
			\draw (792.72,260.92) node [anchor=north west][inner sep=0.75pt]  [font=\footnotesize]  {$x_{1}$};
			\draw (908.92,260.32) node [anchor=north west][inner sep=0.75pt]  [font=\footnotesize]  {$x_{2}$};
			\draw (1018.12,260.12) node [anchor=north west][inner sep=0.75pt]  [font=\footnotesize]  {$x_{3}$};
			\draw (891.12,65) node [anchor=north west][inner sep=0.75pt]  [font=\footnotesize]  {$Master\ Node$};
			\draw (793.32,285.32) node [anchor=north west][inner sep=0.75pt]  [font=\footnotesize]  {$x_{2}$};
			\draw (908.92,285.52) node [anchor=north west][inner sep=0.75pt]  [font=\footnotesize]  {$x_{3}$};
			\draw (1017.52,285.92) node [anchor=north west][inner sep=0.75pt]  [font=\footnotesize]  {$x_{1}$};
			\draw (544,306.87) node [anchor=north west][inner sep=0.75pt]   [align=left] {(b)};
			\draw (928,306.07) node [anchor=north west][inner sep=0.75pt]   [align=left] {(c)};
			\draw (44,260.41) node [anchor=north west][inner sep=0.75pt]  [font=\footnotesize]  {$\tilde{x}_{1}$};
			\draw (161.6,260.21) node [anchor=north west][inner sep=0.75pt]  [font=\footnotesize]  {$\tilde{x}_{2}$};
			\draw (271.2,260.21) node [anchor=north west][inner sep=0.75pt]  [font=\footnotesize]  {$\tilde{x}_{3}$};
			\draw (145.4,66.08) node [anchor=north west][inner sep=0.75pt]  [font=\footnotesize]  {$Master\ Node$};
			\draw (174.6,305.75) node [anchor=north west][inner sep=0.75pt]   [align=left] {(a)};

			\end{tikzpicture}

		}
	\end{center}
	\caption{A simple comparison among three distributed learning settings  (a) Berrut Coded Computing, (b) the scheme without data redundancy, and (c) the data replication scheme, all with $N=3$, $K=3$, where each $\tilde{x}_i$ is a particular linear combination of $x_j,j\in[1:3]$ as mentioned in the proposed scheme. The number of straggling worker nodes in all settings is 1. }
	\label{fig9}
\end{figure*}
Now in a distributed learning approach consider a distributed system with one master node and $N+1$ worker nodes $\mathcal{W}_0,\dots\mathcal{W}_N$ which aim to collaboratively compute the gradient assuming that $s$ nodes are straggler. Due to the limited computing power of each worker node, the training dataset $\mathcal{D}$ is partitioned into $K$ non-overlapping equal-size subsets $\mathcal{D}=\{\mathcal{D}_0,\dots\mathcal{D}_{K-1}\}$, where $\mathcal{D}_j=(\mathbf{X}_j,\mathbf{y}_j)$ for $j\in[K-1]$  is a subset of dataset with size $|\mathcal{D}_j|=B$, for some integer $B$. Thus, the update rule for the weights of layer $\ell$ is given by
\begin{align}\label{update2}
\mathbf{W}_\ell^{(t+1)}=\mathbf{W}_\ell^{(t)}-\frac{\eta}{|\mathcal{D}|}\sum_{j=0}^{K-1}\Delta_\ell^{(t,j)}(\mathbf{S}_{\ell-1}^{(t,j)})^T,
\end{align}
where  $\Delta_\ell^{(t,j)}\in\mathbb{R}^{M_\ell\times B}$ is the back-propagation error matrix corresponding to the $j$th subset of dataset whose $(m,b)$-th entry is defined as $\delta_\ell^{(t,b_j)}(m)$ ($b_j$ denotes the index of data sample in the subset $j$ of dataset), and $\mathbf{S}_{\ell-1}^{(t,j)}\in\mathbb{R}^{M_{\ell-1}\times B}$ is the output of  layer $\ell-1$ corresponding to the $j$th subset of dataset whose $(m,b)$-th entry is ${s}_{\ell-1}^{(t,b_j)}(m)$.  Assume function $g(\mathbf{X},\ell;\mathbf{W}_1^{(t)},\dots \mathbf{W}_L^{(t)})\triangleq\sum_{j=0}^{K-1}\Delta_\ell^{(t,j)}(\mathbf{S}_{\ell-1}^{(t,j)})^T$  for given values $\ell$ and $\{\mathbf{W}_\ell^{(t)}\}_{\ell=1}^L$. It is clear that function $g$ is a non-linear function of $\mathbf{X}$. For simplicity of presentation, we denote $g(\mathbf{X},\ell;\mathbf{W}_1^{(t)},\dots \mathbf{W}_L^{(t)})$ by $g(\mathbf{X})$.

According to \eqref{update2}, we can apply the proposed scheme in 
Section \ref{propose} to approximately compute the value of the updated weights. In the following, we briefly describe our method in the distributed learning setting. \\
\textbf{1)} First, the master node encodes the subsets of the data samples, i.e., $\mathbf{X}_i,i\in[K-1]$ using \eqref{coding}, and generates the rational function $u(z)$. 
	Then the master node selects $z_r=\cos{\frac{j\pi}{N}}$ and sends $\hat{\mathbf{X}}_r\triangleq u(z_r)$ to the $r$th worker node for $r\in[N]$. 
	At each iteration $t$ of the training, the master node needs to send the current estimated parameters $\{\mathbf{W}^{(t)}_\ell\}_{\ell=1}^L$ to each worker node.\\
\textbf{2)} Each worker node stores a linear combination of all subsets of data set. 
	So, worker nodes compute the gradient based on the shared  parameter matrix $\{\mathbf{W}^{(t)}_\ell\}_{\ell=1}^L$ with their local data samples $\hat{\mathbf{X}}_r,r\in[N]$ in parallel, and then send the results back to the master node.\\ 
\textbf{3)}	The announced result of worker node $r$ is an evaluation of the function $g(u(z))$ at $z=z_r$.   Having received the results from a set of non-straggling worker nodes $\mathcal{F}$, the master node can approximately recover $g(u(z))$ with $\mathcal{O}(|\mathcal{F}|)$ of computational complexity as 
	\begin{align}\label{appGrad}
	\hat{g}(u(z))=\sum_{i=0}^{n}\frac{{\frac{{(-1)}^i}{(z-\tilde{z}_i)}}}{\sum_{j}\frac{{(-1)}^j}{(z-\tilde{z}_j)}}g(u(\tilde{z}_i)),
	\end{align}
	where $\tilde{z}_i\in\mathcal{S}, i\in[n]$ are the interpolation points, where $\mathcal{S}=\{\cos{\frac{j\pi}{N}}, j\in \mathcal{F}\}$, and $n\triangleq |\mathcal{F}|-1$. \\
\textbf{4)}	The approximated value of $g(\mathbf{X}_j)$  is achieved by computing $\hat{g}(u(\alpha_j))$  for $j\in[K-1]$, and the master node can update the model parameter using \eqref{update2}.

\begin{figure*}[htb]
	\centering
	\begin{tabular}{@{}c@{}}
		\includegraphics[trim=0 280 0 50,clip,width=\columnwidth]{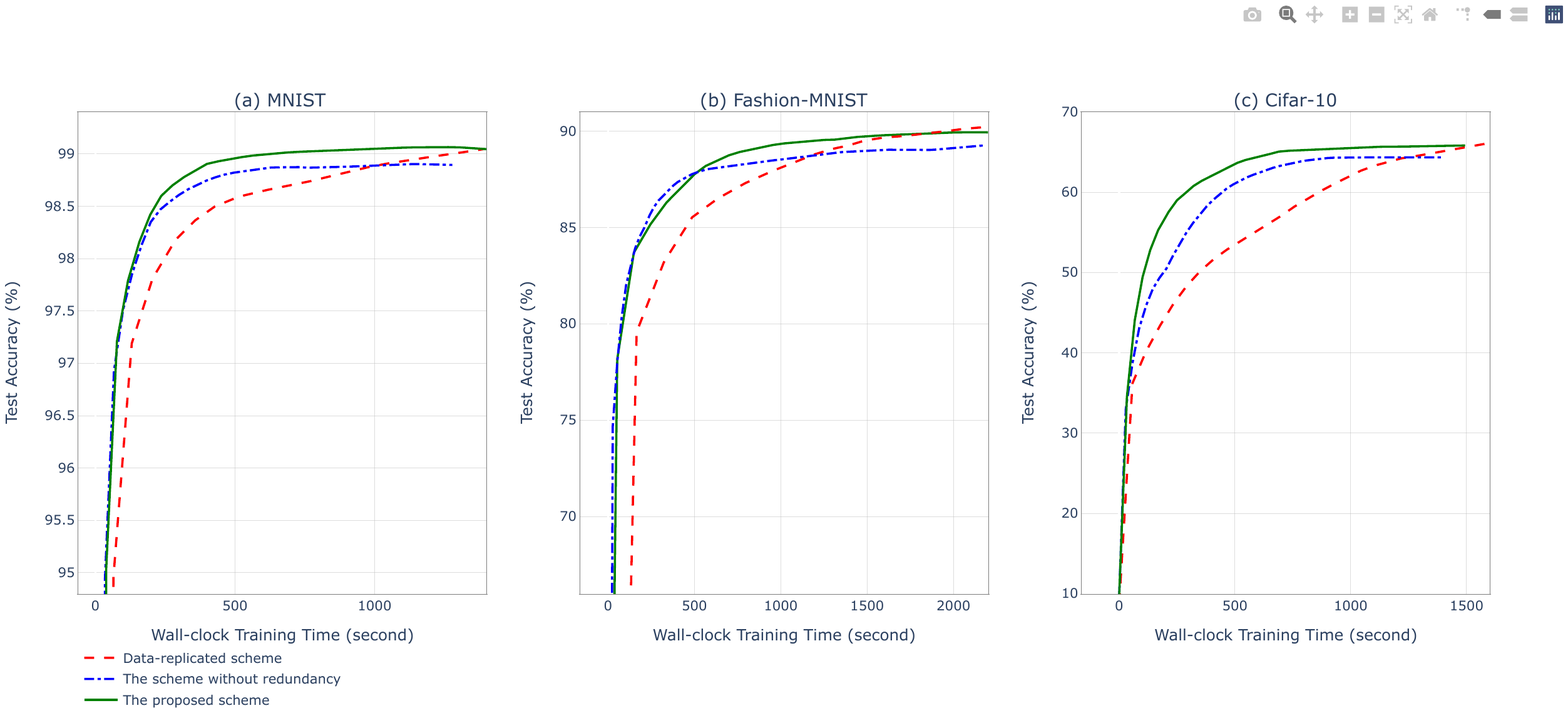}\\ 
		(I)\\
		\includegraphics[trim=0 250 0 50,clip,width=\columnwidth]{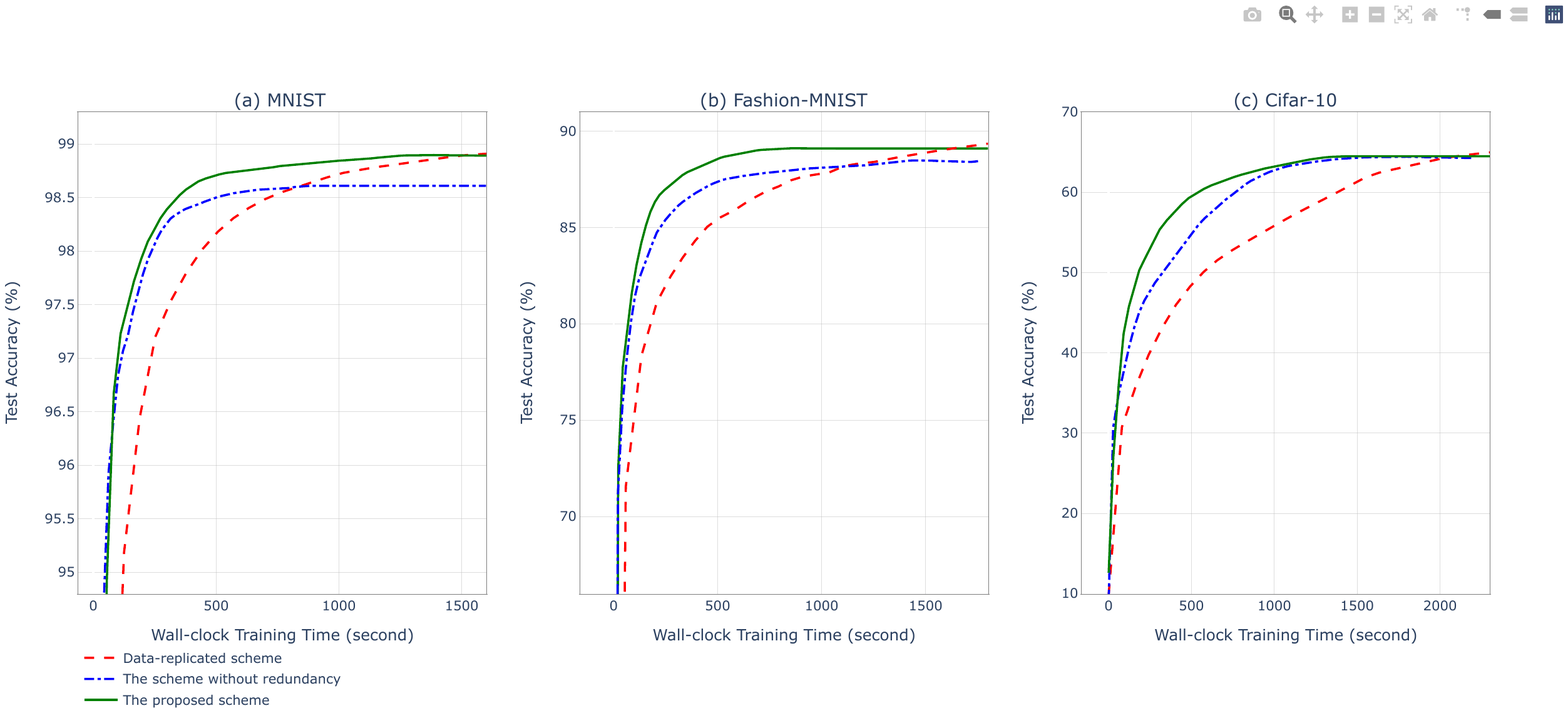}\\
		(II)\\
	\end{tabular}
	\caption{Comparison of the test accuracy of BACC, data replication scheme, and the scheme without data redundancy on (a) MNIST, (b) Fashion-MNIST, and (c) Cifar-10 on LeNet architecture, in a distributed system with (I) $N=3$, $s=1$ and (II) $N=5$, $s=2$.}
	\label{fig8}
\end{figure*}
\section{Experiments}\label{VIII}
In this section, we demonstrate the impact of BACC in distributed learning problem. The proposed scheme is evaluated for MNIST \cite{lecun2010mnist}, Fashion-MNIST \cite{xiao2017fashion}, and Cifar-10 \cite{krizhevsky2009learning} data sets. For this experiment, we use a LeNet \cite{lecun1998gradient} architecture, which consists of two convolutional layers, followed by two fully-connected layers.  In BACC, each worker node only computes gradients sampled from its coded data, which is the combination of $K$ mini-batches. More precisely,  for computing the loss function, we consider our method as a multi-label classification problem in which the training set is composed of samples each associated with a set of labels. In BACC, each worker node needs known coded labels for its training coded data set. For example, the $i$th coded label of the $r$th worker node can be considered as the normalized form of vector $\mathbf{y}_{i,r}^{\text{coded}}\triangleq\bigg|\sum_{j=0}^{K-1}\mathbf{y}_{j,r}\mu_{j,r}\bigg|,$
where $\mathbf{y}_{j,r}$ for $j\in[K-1]$ are $K$ one-hot labels of the samples which are combined with specific coefficients in the $i$th coded data sample. Thus, $\mu_{j,r}={{\frac{{(-1)}^j}{(z_r-\alpha_j)}}}/{\sum_{i=0}^{K-1}\frac{{(-1)}^i}{(z_r-\alpha_i)}}$. Note that the coded labels show that which classes are combined in the coded sample by considering their coefficients. Then, each worker node uses Sigmoid Cross-Entropy loss function to compute the loss and start the  back-propagation algorithm.  Having computed  the coded gradient vectors of the layers, each worker node sends their coded gradient vectors to the master node.  The master node then decodes all desired gradients and updates the model. 

{For comparison, we implement two other methods. One of them is a data replication-based approach, in which each mini-batch is replicated on $s+1$ worker nodes with a specific pattern such that the resulting distributed system tolerates the presence of $s$ stragglers. Another method is a distributed approach without data redundancy. Thus, it can not tolerate the presence of stragglers and can be considered as an approximated scheme.  
As a simple example, Fig. \ref{fig9} (a) shows the proposed scheme,  Fig. \ref{fig9} (b) shows the distributed scheme without data redundancy, and Fig. \ref{fig9} (c) shows the distributed setting of the data replication approach,  where $N=3$, $K=3$, and up to the $s=1$ of these worker nodes can be straggler. According to these configurations, in BACC, each worker node processes one coded data and sends the result back to the master node, but in the data replication approach, each worker node processes two raw data and completes its task by sending the results to the master node. Thus, the communication load needed in this approach for each worker node is two times greater compared to the proposed scheme. In other words, in BACC the computation and communication load per worker node is less compared to the data replication approach, but due to the proposed coding and decoding steps, the approximated results are computed in the proposed scheme. In the following, we implement the proposed scheme and two other approaches in a practical distributed learning problem.
We implement the experiments in PyTorch \cite{paszke2017automatic} using the MPI4Py \cite{dalcin2005mpi}, which is a Python package that provides the Message Passing Interface (MPI) standard for the Python programming language. All the experiments are
implemented on  a high performance computing system with Intel Xeon CPU E5-2699A v4 and up to 512GB RAM. We implement two scenarios in which 
the number of worker nodes and the stragglers are  set to $(N,s)=(3,1)$ and $(N,s)=(5,2)$. 
Figure \ref{fig8} shows how the testing accuracy varies with wall-clock run-time of training. Note that for all configurations, tests are performed on the raw test data set in the master node after each epoch. These curves show that the proposed BACC scheme achieves a certain test accuracy faster than the data replication scheme. Of course, later the data replication-based approach achieves the same accuracy and it slightly surpasses the proposed coded scheme but it is just slower.  The scheme without data redundancy converges fast but it can not achieve the final test accuracy of the data replication scheme and the proposed BACC scheme. The reason is that there are some stragglers in the distributed system, thus in the scheme without data redundancy some parts of the training data set have not been used in the training process. }

%
\bibliographystyle{ieeetr}
\bibliography{References}
\vfill
\newpage
\section{Supplementary Materials}
\subsection{Proof of Lemma \ref{lemma6}}
\renewcommand{\theequation}{S.\arabic{equation}}
\begin{proof}
	According to \eqref{berrut}, Berrut's rational interpolation has basis functions as follows 
	\begin{align}
	\ell_{i,\text{Berrut}}(x)=\frac{{\frac{{(-1)}^i}{(x-x_i)}}}{\sum_{j=0}^{n}\frac{{(-1)}^j}{(x-x_j)}}.
	\end{align}
	Thus,  the Lebesgue constant for Berrut's rational interpolant can be calculated as follows
	\begin{align}
	\Lambda_n= \max_{k\in[n-1]}\max_{x_{k+1}<x<x_k}\sum_{i=0}^{n}{|\ell_{i,\text{Berrut}}(x)|}.
	\end{align}
	Fallowing the approach in \cite{bos2011lebesgue}, we define two functions for $k\in[n-1]$ as follows
	\begin{align}
	N_k(x)\triangleq(x_k-x)(x-x_{k+1})\sum_{i=0}^{n}{\frac{1}{|x-x_i|}},
	\end{align}
	and
	\begin{align}
	D_k(x)\triangleq(x_k-x)(x-x_{k+1}){\bigg|\sum_{j=0}^{n}\frac{{(-1)}^j}{(x-x_j)}\bigg|}.
	\end{align}
	To find an upper bound for the Lebesgue constant it is enough to bound $N_k(x)$ from above and $D_k(x)$ from below. For each $x$, there exist $k$ such that $x_k>x>x_{k+1}$, so we have
	\begin{align}
	N_k(x)&=(x_k-x)(x-x_{k+1})\sum_{i=0}^{n}{\frac{1}{|x-x_i|}}\\
	&=(x_k-x_{k+1})+(x_k-x)(x-x_{k+1})\bigg( \sum_{i=0}^{k-1}{\frac{1}{x_i-x}}+\sum_{i=k+2}^{n}{\frac{1}{x-x_i}}\bigg)\\
	&=(x_k-x_{k+1})+(x-x_{k+1})\sum_{i=0}^{k-1}{\frac{x_k-x}{x_i-x}}+(x_k-x)\sum_{i=k+2}^{n}{\frac{x-x_{k+1}}{x-x_i}}\\
	&\le(x_k-x_{k+1})\bigg(1+\sum_{i=0}^{k-1}{\frac{x_k-x_{k+1}}{x_i-x_{k+1}}}+\sum_{i=k+2}^{n}{\frac{x_k-x_{k+1}}{x_k-x_i}}\bigg)\label{41}.
	\end{align}
	One can verify that \eqref{41} attains its maximum value if there exist $k
	=\bar{k}$ such that $x_k$ and $x_{k+1}$ have the maximum possible distance between in $X_n$. This happens if  $x_{\bar{k}}=\cos{\frac{{\bar{k}}\pi}{n+s}}$ and $x_{{\bar{k}}+1}=\cos{\frac{({\bar{k}}+s+1)\pi}{n+s}}$. That means all $s$ elements not included in $X_n$ are ordered consecutively.

	On the other hand, if $k$ is an even integer then we have
	\begin{align}
	D_k(x)&=(x_k-x)(x-x_{k+1}){\bigg|\sum_{j=0}^{n}\frac{{(-1)}^j}{(x-x_j)}\bigg|}\\
	\nonumber	&=(x_k-x)(x-x_{k+1})\biggl|(\frac{1}{x-x_{0}}-\frac{1}{x-x_{1}})+\dots+(\frac{1}{x-x_{k-2}}-\frac{1}{x-x_{k-1}})+(\frac{1}{x-x_{k}}-\frac{1}{x-x_{k+1}})\\\nonumber&+(\frac{1}{x_{k+3}-x}-\frac{1}{x_{k+2}-x})+\dots\biggr|.
	\end{align}
	All paired terms except $(\frac{1}{x-x_{k}}-\frac{1}{x-x_{k+1}})$ are positive for $x_{k}>x>x_{k+1}$. So, $D_k(x)$ is bounded from below as follows
	\begin{align}
	D_k(x)\ge\bigg| \bar{D}(x) + (x_k-x)(x-x_{k+1})(\frac{1}{x-x_{k}}-\frac{1}{x-x_{k+1}})\bigg|\ge\bigg| \bar{D}(x)-(x_{k}-x_{k+1})\bigg|,\label{43}
	\end{align}
	Where $\bar{D}(x)$ is a positive number for $x\in[x_{k+1},x_k]$.
	Now if $k$ is an odd integer. So, we have
	\begin{align}
	\nonumber	D_k(x)&=(x_k-x)(x-x_{k+1})\biggl|(\frac{1}{x-x_{0}}-\frac{1}{x-x_{1}})\dots+(\frac{1}{x-x_{k-1}}-\frac{1}{x-x_{k}})+(\frac{1}{x_{k+2}-x}-\frac{1}{x_{k+1}-x})+\dots\biggr|\\
	&\nonumber\ge\bigg|(x_k-x)(x-x_{k+1})\bigg((\frac{1}{x-x_{k-1}}-\frac{1}{x-x_{k}})+(\frac{1}{x_{k+2}-x}-\frac{1}{x_{k+1}-x})\bigg)\bigg|\\
	&=\bigg|(x_k-x)(x-x_{k+1})\bigg(\frac{1}{x-x_{k-1}}+\frac{1}{x_{k+2}-x}\bigg)-(x_k-x_{k+1})\bigg|=\bigg|\tilde{D}(x)-(x_k-x_{k+1})\bigg|,\label{44}
	\end{align}
	where $\tilde{D}(x)$ has positive values for $x\in[x_{k+1},x_k]$. Therefor, according to \eqref{43} and \eqref{44}, $D_k(x)$ is minimized if there exist $k
	=\bar{k}$ such that $x_k$ and $x_{k+1}$ have the maximum possible distance in $X_n$. That means   $x_{\bar{k}}=\cos{\frac{{\bar{k}}\pi}{n+s}}$ and $x_{{\bar{k}}+1}=\cos{\frac{({\bar{k}}+s+1)\pi}{n+s}}$.
	Note that the value of $n$ has no effect on the above expressions.
\end{proof}
\subsection{Proof of Theorem \ref{myWellSpaced} }\label{appC}
\begin{proof}
	According to definition \ref{wellSpaced}, it is sufficient to find some constant parameters $C,R\ge1$ such that the three conditions in Definition \ref{wellSpaced} hold for each $x_i\in X_n,i\in[n]$. \\
	\textbf{Finding $C\ge 1$:}
	We check the first condition. Let $N=n+s$, and  assume that for each $k\in[n-1]$, there exists $\alpha_k\ge k$ such that $x_k=\tilde{x}_{\alpha_k}=\cos{\frac{\alpha_k\pi}{N}}$, where $\alpha_k\in[N-1]$. Note that the nodes are ordered, i.e., $x_0<x_1<\dots<x_n$ hence $\tilde{x}_{\alpha_0}<\dots<\tilde{x}_{\alpha_k}$. So, we have
	\begin{align}\label{C1}
	\frac{x_{k+1}-x_k}{x_{k+1}-x_j}=\frac{ -\cos{\frac{\alpha_{k+1}\pi}{N}} + \cos{\frac{\alpha_k\pi}{N}}	}{-\cos{\frac{\alpha_{k+1}\pi}{N}}+\cos{\frac{\alpha_j\pi}{N}}}=
	\frac{   \sin{\frac{(\alpha_{k+1}+\alpha_k)\pi}{2N}}\sin{\frac{(\alpha_{k+1}-\alpha_k)\pi}{2N}}	}{\sin{\frac{(\alpha_{k+1}+\alpha_j)\pi}{2N}}\sin{\frac{(\alpha_{k+1}-\alpha_j)\pi}{2N}}},
	\end{align} 
	where $\alpha_j\le\alpha_k$. Now assume that there exists $1\le\beta\le s+1$ such that $\alpha_{k+1}=\alpha_k+\beta$. So, we can rewrite \eqref{C1} as follows
	\begin{align}\label{in2}
	\frac{x_{k+1}-x_k}{x_{k+1}-x_j}=\frac{   \sin{\frac{(2\alpha_k+\beta)\pi}{2N}}\sin{\frac{\beta\pi}{2N}}	}{\sin{\frac{(\alpha_{k}+\beta+\alpha_j)\pi}{2N}}\sin{\frac{(\alpha_{k}+\beta-\alpha_j)\pi}{2N}}}.
	\end{align} 
	According to the range of $k,j$ and $\beta$, we know that $\frac{\beta\pi}{2N}\le\frac{\pi}{2}$ and $\frac{(\alpha_{k}+\beta-\alpha_j)\pi}{2N}\le\frac{\pi}{2}$. Now we have two cases
	\begin{enumerate}
		\item
		if  $\frac{(2\alpha_k+\beta)\pi}{2N}\le\frac{\pi}{2}$: Since $2\alpha_k+\beta\ge\alpha_k+\beta+\alpha_j$ then $\frac{(\alpha_k+\beta+\alpha_j)\pi}{2N}\le\frac{\pi}{2}$. So, we have
		\begin{align}\label{in1}
		\frac{x_{k+1}-x_k}{x_{k+1}-x_j}\le\frac{  \frac{(2\alpha_k+\beta)\pi}{2N}\frac{\beta\pi}{2N}  }{\frac{2(\alpha_k+\beta+\alpha_j)\pi}{\pi 2N}\frac{2(\alpha_{k}+\beta-\alpha_j)\pi}{\pi 2N}}=\frac{\pi^2\beta(2\alpha_k+\beta)}{4(\alpha_k+\beta+\alpha_j)(\alpha_k+\beta-\alpha_j)}
		\end{align}
		Note that in \eqref{in1} we use Jordan's inequality $\frac{2\theta}{\pi}\le\sin{\theta}\le\theta$ for $\theta\in[0,\pi/2]$. One can verify that $\frac{(2\alpha_k+\beta)}{(\alpha_k+\beta+\alpha_j)}\le2$. Therefore,
		\begin{align}
		\frac{x_{k+1}-x_k}{x_{k+1}-x_j}\le\frac{\pi^2(s+1)}{2}\frac{1}{\alpha_k+\beta-\alpha_j}.
		\end{align}
		According to definitions, we know that $\alpha_k-\alpha_j\ge k-j$ for $k\ge j$. Hence,
		\begin{align}\label{bound1}
		\frac{x_{k+1}-x_k}{x_{k+1}-x_j}\le\frac{\pi^2(s+1)}{2}\frac{1}{k+1-j}.
		\end{align}
		\item
		if $\frac{(2\alpha_k+\beta)\pi}{2N}\ge\frac{\pi}{2}$: According to \eqref{in2} in this case we have
		\begin{align}
		\frac{x_{k+1}-x_k}{x_{k+1}-x_j}\le\frac{   \sin{\frac{(2\alpha_k+\beta)\pi}{2N}}{\frac{\beta\pi}{2N}}	}{\sin{\frac{(\alpha_{k}+\beta+\alpha_j)\pi}{2N}}{\frac{2(\alpha_{k}+\beta-\alpha_j)\pi}{\pi2N}}}.
		\end{align}
		Because $2\alpha_k+\beta\ge\alpha_k+\beta+\alpha_j$ if  $\frac{(\alpha_k+\beta+\alpha_j)\pi}{2N}\ge\frac{\pi}{2}$, then $\frac{\sin{\frac{(2\alpha_k+\beta)\pi}{2N}}}{\sin{\frac{(\alpha_k+\beta+\alpha_j)\pi}{2N}}}\le 1$. So,
		\begin{align}\label{bound2}
		\frac{x_{k+1}-x_k}{x_{k+1}-x_j}\le\frac{\pi (s+1)}{2}\frac{1}{k+1-j}.
		\end{align}
		On the other hand if $\frac{(\alpha_k+\beta+\alpha_j)\pi}{2N}\le\frac{\pi}{2}$ then by using the inequality $\sin\theta\le\frac{2\theta}{\pi}$ for $\theta\in[\pi/2,\pi]$, we have
		\begin{align}\label{bound3}
		\frac{x_{k+1}-x_k}{x_{k+1}-x_j}\le\frac{   {\frac{2(2\alpha_k+\beta)\pi}{\pi2N}}{\frac{\beta\pi}{2N}}	}{\frac{2(\alpha_{k}+\beta+\alpha_j)\pi}{\pi2N}\frac{2(\alpha_{k}+\beta-\alpha_j)\pi}{\pi2N}}\le{\pi(s+1)}\frac{1}{k+1-j}.
		\end{align} 
	\end{enumerate}
	According to \eqref{bound1}, \eqref{bound2}, and \eqref{bound3}, the first condition of Definition \ref{wellSpaced} holds with $C=\frac{\pi^2(s+1)}{2}$. Note that $s$ is independent of $N$. With the same argument one can proof the second condition of Definition \ref{wellSpaced} as well.\\ 
	\textbf{Finding $R\ge 1$:}
	We find a constant $R\ge1$ such that the third condition of Definition \ref{wellSpaced} holds, i.e., $\frac{1}{R}\le	\frac{x_{{k}+1}-x_{{k}}}{x_{{k}}-x_{{k}-1}}\le R$. According to Lemma \ref{lemma6}, in the worst case, there exists $\bar{k}$ such that $x_i=\tilde{x}_{i}=\cos{\frac{i\pi}{N}}$ for $i\in[\bar{k}]$,  and $x_{i}=\tilde{x}_{i+s}=\cos{\frac{(i+s)\pi}{N}}$ for $i\in[\bar{k}+1,n]$. Now we consider three cases as follows\\
	\textbf{Case 1:} if $k=\bar{k}$: we observe that
	\begin{align}
	\frac{x_{\bar{k}+1}-x_{\bar{k}}}{x_{\bar{k}}-x_{\bar{k}-1}}=\frac{ \cos{\frac{(\bar{k}+s+1)\pi}{N}} - \cos{\frac{\bar{k}\pi}{N}}	}{\cos{\frac{\bar{k}\pi}{N}}-\cos{\frac{{(\bar{k}-1)}\pi}{N}}}=\frac{   \sin{\frac{(2\bar{k}+s+1)\pi}{2N}}\sin{\frac{(s+1)\pi}{2N}}	}{\sin{\frac{(2\bar{k}-1)\pi}{2N}}\sin{\frac{\pi}{2N}}},
	\end{align}
	We define  $\theta=\frac{(2\bar{k}-1)\pi}{2N}$.  Furthermore, we know that   $\theta\in[\frac{\pi}{2N},\pi-\frac{(2s+5)\pi}{2N}]$. It is clear that $\frac{\pi}{2N},\frac{(s+1)\pi}{2N}\le\frac{\pi}{2}$. Thus,
	\begin{align}
	\frac{x_{\bar{k}+1}-x_{\bar{k}}}{x_{\bar{k}}-x_{\bar{k}-1}}&\le\frac{\frac{(s+1)\pi}{2N}\sin{(\theta+\frac{(s+2)\pi}{2N})}}{\frac{1}{N}\sin\theta}=\frac{(s+1)\pi}{2}\frac{\sin{\theta}\cos{\frac{(s+2)\pi}{2N}}+\cos{\theta}\sin{\frac{(s+2)\pi}{2N}}}{\sin{\theta}}\\
	&=\frac{(s+1)\pi}{2}\big(\cos{(\frac{(s+2)\pi}{2N})}+\sin{(\frac{(s+2)\pi}{2N})}\cot{\theta}\big).
	\end{align}
	According to the range of $\theta$, we know $\cot{\theta}\le\cot{\frac{\pi}{2N}}$. Therefore,
	\begin{align}
	\frac{x_{\bar{k}+1}-x_{\bar{k}}}{x_{\bar{k}}-x_{\bar{k}-1}}&\le\frac{(s+1)\pi}{2}\big( \cos{\frac{(s+2)\pi}{2N}}+\sin{\frac{(s+2)\pi}{2N}}\cot{\frac{\pi}{2N}}\big)\\
	&=\frac{(s+1)\pi}{2}\big(\frac{\sin{\frac{(s+3)\pi}{2N}}}{\sin{\frac{\pi}{2N}}}\big)\le \frac{(s+1)(s+3)\pi^2}{4}\label{58}
	\end{align}
	On the other hand,
	\begin{align}
	\frac{x_{\bar{k}+1}-x_{\bar{k}}}{x_{\bar{k}}-x_{\bar{k}-1}}&\ge\frac{2(s+1)}{\pi}\big(\cos{(\frac{(s+2)\pi}{2N})}+\sin{(\frac{(s+2)\pi}{2N})}\cot{\theta}\big)\\&\ge\frac{2(s+1)}{\pi}\big(\cos{(\frac{(s+2)\pi}{2N})}+\sin{(\frac{(s+2)\pi}{2N})}\cot{(\pi-\frac{(2s+5)\pi}{2N})}\big)\\&=\frac{2(s+1)}{\pi}\frac{\sin{\frac{(s+3)\pi}{2N}}}{\sin{\frac{(2s+5)\pi}{2N}}}\ge\frac{2(s+1)}{\pi}\frac{\sin{\frac{(s+3)\pi}{2N}}}{\sin{\frac{(2s+6)\pi}{2N}}}\ge\frac{(s+1)}{\pi},\label{61}
	\end{align}
	if $\frac{(s+3)\pi}{N}\le \pi/2$. On the other hand, if $\frac{(s+3)\pi}{N}> \pi/2$ then $\frac{\pi}{4}<\frac{(s+3)\pi}{2N}\le\frac{\pi}{2}+\frac{\pi}{2N}$. Thus 
	\begin{align}
	\frac{x_{\bar{k}+1}-x_{\bar{k}}}{x_{\bar{k}}-x_{\bar{k}-1}}\ge\frac{2(s+1)}{\pi}\frac{\sin{\frac{(s+3)\pi}{2N}}}{\sin{\frac{(2s+5)\pi}{2N}}}\ge\frac{2(s+1)}{\pi}\frac{\sin{\frac{\pi}{4}}}{1}=\frac{\sqrt{2}(s+1)}{\pi},\label{62}
	\end{align}
	if $s< N-2$.
	According to \eqref{58}, \eqref{61} and \eqref{62}, in this case, the third condition of definition \ref{wellSpaced} holds with $R=\frac{(s+1)(s+3)\pi^2}{4}$.\\
	\textbf{Case 2:} if $k=\bar{k}+s+1$, then we have
	\begin{align}\label{73}
	\frac{x_{\bar{k}+s+2}-x_{\bar{k}+s+1}}{x_{\bar{k}+s+1}-x_{\bar{k}}}=\frac{ \cos{\frac{(\bar{k}+s+2)\pi}{N}} - \cos{\frac{(\bar{k}+s+1)\pi}{N}}	}{\cos{\frac{(\bar{k}+s+1)\pi}{N}}-\cos{\frac{{\bar{k}}\pi}{N}}}=\frac{   \sin{\frac{(2\bar{k}+2s+3)\pi}{2N}}\sin{\frac{\pi}{2N}}	}{\sin{\frac{(2\bar{k}+s+1)\pi}{2N}}\sin{\frac{(s+1)\pi}{2N}}}
	\le \frac{\pi}{2(s+1)}\frac{\sin({\tilde{\theta}+\frac{(s+2)\pi}{2N}})}{\sin{\tilde{\theta}}},
	\end{align}
	where $\tilde{\theta}\triangleq\frac{(2\bar{k}+s+1)\pi}{2N}$. According to the range of $\tilde{\theta}\in[\frac{(s+3)\pi}{2N},\pi-\frac{(s+3)\pi}{2N}]$, \eqref{73} is bounded as follows
	\begin{align}
	\frac{x_{\bar{k}+s+2}-x_{\bar{k}+s+1}}{x_{\bar{k}+s+1}-x_{\bar{k}}}	\le \frac{2\pi}{2(s+1)}\cos{\frac{(s+2)\pi}{2N}}\le \frac{\pi}{s+1}.
	\end{align}
	On the other hand,
	\begin{align}
	\frac{x_{\bar{k}+s+2}-x_{\bar{k}+s+1}}{x_{\bar{k}+s+1}-x_{\bar{k}}}	\ge \frac{2}{\pi(s+1)}\frac{\sin{\frac{\pi}{2N}}}{\sin{\frac{(s+3)\pi}{2N}}}\ge \frac{4}{\pi^2(s+1)(s+3)}.
	\end{align}	
	\\ \textbf{Case 3:} if $k<\bar{k}$ or $k>\bar{k}+s+1$, one can verify that
	\begin{align}\label{70}
	\frac{2}{3\pi}\le\frac{x_{{k}+1}-x_{{k}}}{x_{{k}}-x_{{k-1}}}= \frac{ \cos{\frac{({k}+1)\pi}{N}} - \cos{\frac{{k}\pi}{N}}	}{\cos{\frac{{k}\pi}{N}}-\cos{\frac{{{(k-1)}}\pi}{N}}}\le \frac{3\pi}{2}.
	\end{align} 
	According to these cases, the third condition of definition \ref{wellSpaced} holds with $R=\frac{(s+1)(s+3)\pi^2}{4}$ for $s< N-2$. 
\end{proof}
\subsection{Proof of Theorem \ref{thBound}}\label{appB}
\begin{proof}
	From Theorem \ref{error1} we know
	\begin{align}
	\norm{r_{\text{Berrut},\mathcal{F}}(z)-g(z)}\le h(1+\lambda){\norm{g^{\prime\prime}(z)}},
	\end{align}
	if $n$ is odd, and 
	\begin{align}
	\norm{r_{\text{Berrut},\mathcal{F}}(z)-g(z)}\le h(1+\lambda)\big({\norm{g^{\prime\prime}(z)}}+\norm{g^{\prime}(z)}\big),
	\end{align}
	if $n$ is even.
	Let $\mathcal{X}=\{x_k\}_{k=0}^n$ be a set of ordered distinct interpolation points which is a subset of Chebyshev points of second kind, i.e.,  $\mathcal{X}\subset\tilde{\mathcal{X}}=\{\tilde{x}_{\alpha}\}_{\alpha=0}^N$, where $x_k=\tilde{x}_{\alpha_k}=-\cos{\frac{\alpha_k\pi}{N}}$ and $N=n+s$, $\alpha_k\ge k$.
	We define function $h(k)=x_{k+1}-x_{k}$. So,  there exist $1\le\beta\le s+1$ such that 
	$h(k)=-\cos{\frac{(\alpha_k+\beta)\pi}{N}}+\cos{\frac{\alpha_k\pi}{N}}$. One can show that $h(k)$ attains its maximum when $\frac{\alpha_k\pi}{N}=\frac{\pi}{2}-\frac{\beta\pi}{2N}$. Therefore, we have
	\begin{align}\label{h}
	h=\max_{0\le k\le n}({x_{k+1}-x_{k}})=2\sin\frac{\beta\pi}{2N}\le2\sin\frac{(s+1)\pi}{2N},
	\end{align}
	because $\sin(x)$ is increasing in $[0,\pi/2]$. On the other hands, according to Max-min inequality, the local mesh ratio is bounded as follows
	\begin{align}
	\lambda\le\min\{\max_{1\le i \le n-2}\frac{x_{i+1}-x_i}{x_i-x_{i-1}},\max_{1\le i \le n-2}\frac{x_{i+1}-x_i}{x_{i+2}-x_{i+1}} \}.
	\end{align}
	According  to Appendix \ref{appC}, we know that $\frac{x_{i+1}-x_i}{x_i-x_{i-1}}\le R$ and similarly one can prove that ${\frac{x_{i+1}-x_i}{x_{i+2}-x_{i+1}}}\le R$ as well, where $R=\frac{(s+1)(s+3)\pi^2}{4}$ and $i=[1:n-2]$. Therefore the mesh ratio is bounded and we have $\lambda\le R$. 

\end{proof}


\end{document}